\setlist{nosep}
\newcommand{\R}{\mathbb{R}}
\renewcommand{\ker}{\mathrm{Ker}}
\newcommand{\aug}{\operatorname{aug}}
\newcommand{\until}[1]{\{1,\dots, #1\}}
\newtheorem{theorem}{Theorem}[section]
\newtheorem{corollary}[theorem]{Corollary}
\newtheorem{lemma}{Lemma}[section]
\newtheorem{proposition}[theorem]{Proposition}
\newtheorem{remark}[theorem]{Remark}
      \theoremstyle{plain}
\DeclareSymbolFont{bbold}{U}{bbold}{m}{n}
\DeclareSymbolFontAlphabet{\mathbbold}{bbold}
\newcommand{\vect}[1]{\mathbbold{#1}}
\newcommand{\diag}{\operatorname{diag}}
\newcommand{\rank}{\operatorname{rank}}
\newcommand{\setdef}[2]{\{#1 \; | \; #2\}}
\newcommand{\fPD}{\subscr{F}{PD}}
\newcommand{\fPDd}{\subscr{F}{PD-d}}
\newcommand{\fPDo}{\subscr{F}{PD-tv}}
\newcommand{\subscr}[2]{#1_{\textup{#2}}}
\newcommand{\map}[3]{#1: #2 \rightarrow  #3}
\newcommand{\real}{\ensuremath{\mathbb{R}}}
\newcommand\norm[1]{\left\lVert#1\right\rVert}
\renewcommand{\norm}[1]{\|#1\|}
\newcommand\oprocendsymbol{\hbox{$\square$}}
\newcommand\oprocend{\relax\ifmmode\else\unskip\hfill\fi\oprocendsymbol}
\DeclareSymbolFont{bbold}{U}{bbold}{m}{n}
\DeclareSymbolFontAlphabet{\mathbbold}{bbold}
\newcounter{saveenum}
\title{A contraction analysis of primal-dual dynamics in distributed and time-varying implementations}
\title{Contraction analysis of distributed and time-varying primal-dual dynamics}
\title{Distributed and time-varying primal-dual dynamics via contraction analysis}
\author{Pedro Cisneros-Velarde, Saber Jafarpour, Francesco Bullo
  \thanks{Pedro Cisneros-Velarde (pacisne@gmail.com), Saber Jafarpour and Francesco Bullo (\{saber,bullo\}@ucsb.edu) are with the 
    University of California,
    Santa Barbara.}
    }
\date{}
\begin{document}
\maketitle

\begin{abstract}
  In this note, we provide an overarching analysis of primal-dual dynamics associated to
  linear equality-constrained optimization problems 
   using contraction analysis.
  For the 
  well-known standard version of the problem: we establish
  convergence 
  under convexity 
  and 
  the contracting rate under strong convexity.
 Then, for a canonical 
 distributed optimization problem,
 we use partial contractivity to
 establish  
  global exponential
  convergence of its 
  primal-dual dynamics.
  As an application, 
  we propose a new distributed solver for the least-squares
  problem with the same convergence guarantees. 
  Finally, for 
  time-varying versions of both centralized and distributed
  primal-dual dynamics,
  we exploit their contractive
  nature to establish bounds on their tracking error.  To
  support our 
  analyses, we introduce novel results on
  contraction theory.
\end{abstract}

%


\section{Introduction}

\paragraph*{Problem statement and motivation}

Primal-dual (PD) dynamics are dynamical systems that solve 
constrained optimization problems. Their study can be traced back to many decades
ago~\cite{KJA-LH-HU:58} and has 
regained
interest since the last decade~\cite{DF-FP:10}. 
PD 
dynamics have been made popular due to their scalability and
simplicity.
 They have been widely adopted in engineering applications 
such as resource allocation problems
in power networks~\cite{JWSP-BKP-NM-FD:19}, frequency control in
micro-grids~\cite{EM-CZ-SL:14a}, solvers for linear
equations~\cite{PW-SM-JL-WR:19}, etc. 
In this note, we study optimization
problems with linear equality constraints. In general, PD 
dynamics
seek to find a saddle point of the associated Lagrangian function to the
constrained problem, which is characterized by the equilibria of the dynamics. 
%
%
For a general treatise of asymptotic stability of saddle points, 
we refer to
~\cite{AC-BG-JC:17} and references therein. However, despite the long history of study and application, there are very recent 
studies on PD 
dynamics related to linear equality constraints 
further 
studying 
different dynamic properties such as: exponential  convergence under different convexity assumptions
~\cite{GQ-NL:19,XC-NL:19} and contractivity properties~\cite{HDN-TLV-KT-JJS:18}. 
%

We are particularly interested in studying primal-dual dynamics in distributed and time-varying optimization problems. We refer to the recent survey~\cite{TY-XY-JW-DW:19} 
for an  overview of the long-standing interest on distributed 
optimization. 
Of particular interest is to provide strong convergence guarantees such as global (and exponential)  convergence for the distributed solvers. We aim to provide them 
using contraction theory.
%
%
Time-varying optimization has 
found applications 
in system identification, signal detection, robotics, 
traffic management, etc.~\cite{MF-SP-VMP-AR:18,CS-MY-GH:17}. The goal is to employ a dynamical system 
able to track the time-varying optimal solution up to some bounded error in real time. Although different dynamics have been proposed to 
both time-varying centralized~\cite{MF-SP-VMP-AR:18} and distributed 
problems~\cite{CS-MY-GH:17,SR-WR:17}, to the best of our knowledge, there has not been a characterization of the PD 
dynamics in such application contexts. The importance of PD 
algorithms is their simplicity of implementation, i.e., they do not require more complex information structures like the inverse of the Hessian of the system at all times, as in~\cite{MF-SP-VMP-AR:18} and~\cite{SR-WR:17} for the centralized and distributed cases respectively. 
However, 
simplicity may come with 
a possible trade-off in the tracking error.

Contraction is valuable in practice because it introduces strong stability and robustness guarantees. 
For example, it implies input-to-state stability for systems subject to state-independent disturbances. It also 
guarantees fast correction after transient perturbations to the trajectory of the solution, 
since initial conditions are forgotten. Moreover, a contractive system may be robust towards structural perturbations on the vector field, 
e.g., when a non-convex term is added to the objective function. Finally, 
contraction 
guarantees stable numerical discretizations with geometric convergence rates, an ideal situation for practical implementations. 
%
All these properties are transparent to whether the system is time-varying or not. 
All of this 
motivates a contraction analysis of PD 
algorithms 
%
in contrast to the prevalent Lyapunov or invariance analysis in the literature. 

\paragraph*{Literature review}

The recent works~\cite{GQ-NL:19,HDN-TLV-KT-JJS:18,XC-NL:19} study convergence properties of 
PD 
dynamics under different assumptions on the objective function. 
In 
distributed optimization, solvers based on PD 
dynamics are fairly recent, e.g., ~\cite{JW-NE:11,JC-SKN:19,TY-XY-JW-DW:19}. 
An application of distributed optimization of current interest - as seen in the recent survey~\cite{PW-SM-JL-WR:19} - is the \emph{distributed least-squares problem} for solving 
an over-determined system of linear equations. To the best of our knowledge, solvers for this problem (in continuous-time) with exponential global convergence 
are still missing in the literature.
%

Finally, this paper is related to contraction theory, a mathematical tool to analyze 
incremental stability
~\cite{WL-JJES:98,MV:02}. An introduction and survey 
can be found in~\cite{ZA-EDS:14b}. A variant of contraction theory, \emph{partial contraction}~\cite{QCP-JJS:07,MdB-DF-GR-FS:16}, analyzes the convergence to linear subspaces 
and 
has been used in the synchronization analysis of diffusively-coupled network systems~\cite{QCP-JJS:07,ZA-EDS:14}; however, its application to distributed algorithms is still missing, and our paper provides such contribution. 

\paragraph*{Contributions}
In this paper we consider the PD dynamics associated to optimization problems with a 
twice differentiable and strongly convex
objective function 
and linear equality constraints. We use 
contraction theory to perform an overarching study of PD dynamics in a variety of implementations and applications; see Fig.~\ref{fig:rel}. In particular: 

(i) We introduce new theoretical results 
of how 
\emph{non-expansiveness} and \emph{partial contraction} can imply exponential convergence 
to a 
point 
in a subspace of equilibria. 

(ii) For the standard and distributed PD dynamics, we prove: 1) convergence under 
non-expansiveness when the objective function is convex; 2) contraction for the standard problem and partial contraction for the distributed one 
in the 
strongly convex
case, with closed-form exponential global convergence rates. The analysis in result 1) is novel, since it uses the new results introduced in (i). 
Compared to the work~\cite{HDN-TLV-KT-JJS:18} that also shows contraction for the standard PD, our proof method provides an explicit closed-form expression of the system's contraction rate. Our exponential convergence rate is different from the one by~\cite{GQ-NL:19} via Lyapunov analysis, and 
both rates 
cannot be compared 
without extra 
assumptions on the numerical relationships among various parameters associated to the 
objective function or constraints. 
Moreover, we propose using the \emph{augmented Lagrangian} 
in order to achieve contraction when the objective function is 
only 
convex. 
%
In the case of distributed optimization, 
there exist other 
solvers that show exponential convergence, e.g., as in~\cite{SK-JC-SM:15,SL-LYW-GY:19}, but none of these have 
contractivity. 
(iii) We propose a new solver for the distributed least-squares problem based on 
PD dynamics, and use our results in (ii) to prove its convergence. 
Compared to the recent work~\cite{YL-CL-BDOA-GS:19}, our new model exhibits global convergence; and compared to the recent work
~\cite{YL-YL-DOA-GS:18}, ours exhibits exponential convergence 
and has a simpler structure. 
%

(iv) 
We 
characterize the performance of PD dynamics associated to time-varying versions of both standard and distributed optimization problems in terms of the problems' parameters. In particular, we prove the tracking error to the time-varying solutions is uniformly ultimately bounded (UUB) in either case and that the 
bound decreases as the contraction rate increases --- these results, to the best of our knowledge, are novel. Our analysis builds upon the contraction results in contribution (ii). 

\paragraph*{Paper organization} 
Section~\ref{sec:prelim} has notation and preliminary concepts. Section~\ref{sec:contr-th} has results on contraction theory. Section~\ref{sec:basic-prob} analyzes contractive properties of the standard PD dynamics. 
The contractive analysis of distributed (with the least-squares problem application) and time-varying versions of PD dynamics are in Sections~V and~VI respectively.
%
Section~\ref{sec:concl} is the conclusion.

\section{Preliminaries and notation}
\label{sec:prelim}

\subsection{Notation, definitions and useful results}
Consider $A\in\R^{n\times{n}}$, 
then $\sigma_{\min}(A)$ denote its minimum
singular value and $\sigma_{\max}(A)$ its maximum one. 
%
%
If $A$ has only
real eigenvalues, let $\lambda_{\max}(A)$ be its 
maximum eigenvalue. $A$
  is an orthogonal projection if it is symmetric
  and $A^2 = A$. Let $\norm{\cdot}$ denote any norm, and $\norm{\cdot}_p$ denote the $\ell_p$-norm. When the argument of a norm is a matrix, we refer to its respective induced norm. 
The matrix measure associated 
to 
$\norm{\cdot}$ is $\mu(A) = \lim_{h\to 0^{+}}\frac{\|I+hA\|-1}{h}$; e.g., 
the one 
associated to the $\ell_2$-norm is 
$\mu_2(A)=\lambda_{\max}((A+A^\top)/2)$~\cite{ZA-EDS:14b}. 
Given invertible $Q\in\R^{n\times n}$, let $\norm{\cdot}_{2,Q}$ be the weighted $\ell_2$-norm $\norm{x}_{2,Q}=\norm{Qx}_2$
, $x\in\R^n$, and whose associated matrix measure is 
$\mu_{2,Q}(A)=\mu_{2}(QAQ^{-1})$~\cite{ZA-EDS:14b}. 
  
Let $I_n$ be the $n\times n$ identity matrix, $\vect{1}_n$ and
$\vect{0}_n$ be the all-ones and all-zeros column vector with $n$
entries respectively. Let $\diag(X_1,\dots,X_N)\in\R^{\sum^N_{i=1}n_i\times\sum^N_{i=1}n_i}$ be the 
block-diagonal matrix with elements 
$X_i\in\R^{n_i\times n_i}$. 
Let $\R_{\geq 0}$ be the set of non-negative real numbers. Given $x_i\in\R^{k_i}$, 
let $(x_1,\dots,x_N)=\begin{bmatrix}x_1^\top &\dots&x_N^\top\end{bmatrix}$.

Consider a differentiable function $f:\R^n\to\R^n$. We say $f$ is \emph{Lipschitz smooth} with constant $K_1>0$ if $\norm{\nabla f(x)-\nabla f(y)}_2\leq K_1\norm{x-y}_2$ for any $x,y\in\R^n$; and \emph{strongly convex} with constant $K_2>0$ if $K_2\norm{x-y}_2^2\leq (\nabla f(x)-\nabla f(y))^\top(x-y)$ for any $x,y\in\R^n$. Assuming $f$ is twice differentiable, these two conditions are equivalent to $\nabla^2 f(x)\preceq K_1I_n$ and $K_2I_n\preceq\nabla^2 f(x)$ for any $x\in R^n$, respectively.
The proof of the next proposition is found in the Appendix.

\begin{proposition}
  \label{prop-neg-eig} For a full-row rank matrix $A\in\R^{m\times{n}}$,
  $B=B^\top\in R^{n\times n}$, and $b_2\geq{b_1}>0$ such that $b_2
  I_n\succeq B \succeq b_1 I_n\succ0$, the matrix $\begin{bmatrix} -B
    &-A^\top\\ A & \vect{0}_{m\times m}
\end{bmatrix}$ is Hurwitz.
\end{proposition}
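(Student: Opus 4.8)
The plan is to show that every eigenvalue of $M=\begin{bmatrix}-B&-A^\top\\A&\vect{0}_{m\times m}\end{bmatrix}$ has strictly negative real part. Let $\lambda\in\complex$ be an eigenvalue with eigenvector $\begin{bmatrix}u\\v\end{bmatrix}\neq\vect{0}$, so that
\begin{align}
-Bu-A^\top v&=\lambda u,\\
Au&=\lambda v.
\end{align}
First I would take the Hermitian inner product of the first equation with $u$, obtaining $-u^*Bu-u^*A^\top v=\lambda\|u\|_2^2$. Using the second equation in the form $v^*=\bar\lambda^{-1}(Au)^*$ when $\lambda\neq0$ (and handling $\lambda=0$ separately), the cross term $u^*A^\top v=(Au)^*v$ becomes $\bar\lambda\|v\|_2^2$ after substituting $Au=\lambda v$. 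Thus $-u^*Bu-\bar\lambda\|v\|_2^2=\lambda\|u\|_2^2$, i.e.
\[
\lambda\|u\|_2^2+\bar\lambda\|v\|_2^2=-u^*Bu.
\]
Taking real parts and using $u^*Bu\geq b_1\|u\|_2^2\geq 0$ gives $\operatorname{Re}(\lambda)\,(\|u\|_2^2+\|v\|_2^2)=-u^*Bu\leq 0$, so $\operatorname{Re}(\lambda)\leq 0$ always.

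Next I would rule out $\operatorname{Re}(\lambda)=0$. If $\operatorname{Re}(\lambda)=0$ then the displayed identity forces $u^*Bu=0$, and since $B\succeq b_1I_n\succ0$ this gives $u=\vect{0}$. Feeding $u=\vect{0}$ back into the eigenvector equations yields $A^\top v=-\lambda v$ and $\vect{0}=\lambda v$. If $\lambda\neq0$ then $v=\vect{0}$, contradicting that the eigenvector is nonzero; if $\lambda=0$ then $A^\top v=\vect{0}$, but $A$ has full row rank so $A^\top$ has trivial kernel, again forcing $v=\vect{0}$ and a contradiction. Hence no purely imaginary (including zero) eigenvalue exists, so every eigenvalue satisfies $\operatorname{Re}(\lambda)<0$ and $M$ is Hurwitz.

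The only mildly delicate point — the main obstacle — is bookkeeping the case $\lambda=0$ cleanly, since the substitution $v=\lambda^{-1}Au$ is invalid there; this is exactly where full row rank of $A$ (equivalently, injectivity of $A^\top$) is needed, whereas positive definiteness of $B$ does the work in the $\operatorname{Re}(\lambda)=0$, $\lambda\neq0$ case. Everything else is a short computation with Hermitian forms, and the bounds $b_2I_n\succeq B\succeq b_1I_n$ beyond mere positive definiteness are not actually needed for the Hurwitz conclusion (they presumably matter for the quantitative contraction-rate statements elsewhere in the paper).
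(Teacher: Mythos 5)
Your proof is correct and takes essentially the same route as the paper's: both establish $\Re(\lambda)\le 0$ from the quadratic form $\begin{bmatrix}u^{H} & v^{H}\end{bmatrix}M\begin{bmatrix}u\\ v\end{bmatrix}$ (the paper phrases this as $\Re(\lambda)\le\mu_2(M)=0$), then use $B\succeq b_1 I_n\succ 0$ to force $u=\vect{0}_n$ and the full row rank of $A$ (injectivity of $A^\top$) to force $v=\vect{0}_m$ for an eigenvalue on the imaginary axis. The only cosmetic difference is that the paper rules out $\lambda=0$ in a separate step via the Schur-complement determinant identity $\det(M)=\det(-B)\det(-AB^{-1}A^{\top})$, whereas you fold that case into the same eigenvector argument; your observation that the bounds $b_2 I_n\succeq B\succeq b_1 I_n$ beyond mere positive definiteness are not needed here is also accurate.
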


\subsection{Review of basic concepts on contraction theory}

Consider the dynamical system 
 $\dot{x}=f(x,t)$
with $x\in\R^n$. 
Let $t\mapsto\phi(t,t_0,x_0)$ be the trajectory of
  the system 
  starting from $x_0\in\R^n$ at time $t_0\geq 0$.
%
Consider the system satisfies $\norm{\phi(t,t_0,x_0)-\phi(t,t_0,y_0)}\leq \norm{x_0-y_0}e^{-c(t-t_0)}$, for any $x_0,y_0\in\R^n$ and any $t_0\in\real_{\ge 0}$. 
We say it is \emph{contractive} with
respect to $\norm{\cdot}$ when $c>0$, and \emph{non-expansive} 
when $c=0$.
%
%
A time-invariant contractive system has a unique
equilibrium point.
%
%
%
Now, assume the Jacobian of the system, 
i.e., $Df(x,t)$, satisfies:
$\mu(Df(x,t))\leq -c$ for any $(x,t)\in
\R^n\times\real_{\ge 0}$, with $\mu$ being the matrix measure associated to $\norm{\cdot}$ and constant $c\geq 0$. 
Then, 
this system 
has contraction rate $c$ with respect to
$\norm{\cdot}$. 
Now, assume the
system 
has a flow-invariant linear subspace
$\mathcal{M}=\setdef{x\in\R^n}{Vx=\vect{0}_k}$ with
$V\in\R^{k\times{n}}$ being 
full-row rank 
with orthonormal
rows. Then the system is \emph{partially contractive} with respect to
$\norm{\cdot}$ and 
$\mathcal{M}$ if there exists
$c>0$ such that, for any $x_0\in\R^n$ and $t_0\in \real_{\geq 0}$, the 
system 
satisfies
$\norm{V\phi(t,t_0,x_0)}\leq \norm{Vx_0}e^{-c(t-t_0)}$. When
$c=0$, 
the system is \emph{partially non-expansive} with
respect to $\mathcal{M}$~\cite{QCP-JJS:07}. Consequently, a partially
contractive system has any of its trajectories approaching $\mathcal{M}$
with exponential rate; 
%
and a partially non-expansive one has any of its trajectories at a
non-increasing distance from $\mathcal{M}$. 

Pick a symmetric positive-definite $P\in\R^{n\times{n}}$ and a scalar $c>0$, then 
$\mu_{2,P^{1/2}}(Df(x,t))\leq -c$ for all $(x,t)\in\R^n\times{\R_{\geq 0}}$ is equivalent to $f$ satisfying the \emph{integral contractivity condition}, i.e., for every $x,y\in\R^n$ and $t\geq 0$,
$(y-x)^\top P(f(x,t)-f(y,t))\leq -c\norm{x-y}_{2,P^{1/2}}^2$.
%
%
    
\section{Theoretical contraction results}
\label{sec:contr-th}
The next 
result 
will be used throughout the paper.

\begin{theorem}[Results on partial contraction]
\label{th-contr-sub}
Consider the system $\dot{x}=f(x,t)$, $x\in\R^n$, 
with a flow-invariant $\mathcal{M}=\setdef{x\in\R^n}{Vx=\vect{0}_k}$ with $V\in\R^{k\times{n}}$ being a full-row rank matrix with orthonormal rows. Assume $\mu(VDf(x,t)V^\top)\leq -c$ for any $(x,t)\in\R^n\times\R_{\geq 0}$, some constant $c\geq 0$ and some matrix measure $\mu$.
\begin{enumerate}
\item\label{thss_1}If $c>0$, then the system is partially contractive with
  respect to $\mathcal{M}$ and every trajectory exponentially converges to
  the subspace $\mathcal{M}$ with rate $c$.
\item\label{thss_1a}If $c=0$ and $\mu(VDf((I_n-V^\top V)x,t)V^\top)<0$ for
  any $(x,t)\in\R^n\times \real_{\ge 0}$, then the system is partially
  non-expansive with respect to $\mathcal{M}$ and every trajectory
  converges to the subspace $\mathcal{M}$.
  \setcounter{saveenum}{\value{enumi}}
\end{enumerate}
Moreover, assume that one of the conditions in parts~\ref{thss_1} and~\ref{thss_1a} holds and $\mathcal{M}$ is a set of equilibrium points. If the system is non-expansive, then 
\begin{enumerate}\setcounter{enumi}{\value{saveenum}}
\item\label{thss_2}every trajectory of the system converges to an equilibrium point, and if $c>0$, then it does it with exponential rate $c$. 
\setcounter{saveenum}{\value{enumi}}
\end{enumerate}
\end{theorem}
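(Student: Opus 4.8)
The plan is to reduce everything to the behavior of the transverse variable $u(t):=V\phi(t,t_0,x_0)\in\R^k$, for which I will derive a linear time-varying system and then invoke the standard matrix-measure (Coppel) bound. Write $\phi(t):=\phi(t,t_0,x_0)$. Flow-invariance of $\mathcal{M}=\setdef{x\in\R^n}{Vx=\vect{0}_k}$ means precisely that $Vf(x,t)=\vect{0}_k$ whenever $Vx=\vect{0}_k$; and since the rows of $V$ are orthonormal, $VV^\top=I_k$, so $(I_n-V^\top V)x\in\mathcal{M}$ for every $x$, whence $Vf\big((I_n-V^\top V)\phi(t),t\big)=\vect{0}_k$. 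Using $\phi(t)-(I_n-V^\top V)\phi(t)=V^\top V\phi(t)=V^\top u(t)$, the fundamental theorem of calculus gives
\[
\dot u(t)=Vf(\phi(t),t)-Vf\big((I_n-V^\top V)\phi(t),t\big)=M(t)u(t),\qquad M(t):=\int_0^1 VDf\big((I_n-V^\top V)\phi(t)+sV^\top u(t),t\big)V^\top\,ds .
\]
By positive homogeneity and subadditivity of any matrix measure (so $\mu(\int_0^1A(s)\,ds)\le\int_0^1\mu(A(s))\,ds$), the standing hypothesis $\mu(VDf(x,t)V^\top)\le -c$ (which holds for every argument, in particular for $(I_n-V^\top V)\phi(t)+sV^\top u(t)$) yields $\mu(M(t))\le -c$ for all $t$.

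Part~\ref{thss_1} is then immediate: Coppel's inequality for $\dot u=M(t)u$ gives $\norm{u(t)}\le\norm{u(t_0)}\exp\big(\int_{t_0}^t\mu(M(\tau))\,d\tau\big)\le\norm{Vx_0}e^{-c(t-t_0)}$, which is partial contractivity with rate $c$; when $c>0$ this forces $\dist(\phi(t),\mathcal{M})\to0$ exponentially with rate $c$ (note $\dist(\phi(t),\mathcal{M})\le\norm{\phi(t)-(I_n-V^\top V)\phi(t)}=\norm{V^\top V\phi(t)}\le\norm{V^\top}\norm{V\phi(t)}$). For Part~\ref{thss_1a}, the same estimate with $c=0$ shows $\norm{V\phi(t)}$ is non-increasing, i.e.\ partial weak contractivity; to get convergence to $\mathcal{M}$, note that the $s=0$ term of $M(t)$ has argument $(I_n-V^\top V)\phi(t)$, which is of the form $(I_n-V^\top V)x$, so by the added hypothesis it is \emph{strictly} negative in measure, while every other term is $\le 0$; hence $\mu(M(t))<0$ for each $t$. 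Since $\norm{u(t)}$ is non-increasing and bounded below, $\ell:=\lim_{t\to\infty}\norm{u(t)}$ exists, and a Barbalat/LaSalle-type argument then forces $\ell=0$, i.e.\ $V\phi(t)\to\vect{0}_k$. I expect this last step to be the main obstacle: pointwise strict negativity of $\mu(M(t))$ by itself does not force $u(t)\to 0$ in a time-varying setting, so the argument must genuinely exploit the structure of $M(t)$ — that it depends on the trajectory only through $(I_n-V^\top V)\phi(t)$, which approaches $\mathcal{M}$, and that the transverse excursion $V^\top u(t)$ remains bounded — in order to rule out $\mu(M(t))$ decaying to $0$ too fast along the trajectory.

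For Part~\ref{thss_2}, I would fix a trajectory $\phi(t):=\phi(t,t_0,x_0)$. By Part~\ref{thss_1} or~\ref{thss_1a}, $V\phi(t)\to\vect{0}_k$, so the point $q(t):=(I_n-V^\top V)\phi(t)\in\mathcal{M}$ satisfies $\varepsilon(t):=\norm{\phi(t)-q(t)}=\norm{V^\top V\phi(t)}\to 0$. Since $\mathcal{M}$ is a set of equilibria, $q(t)$ is an equilibrium, so $\phi(s,t,q(t))=q(t)$ for $s\ge t$; applying weak contractivity with initial time $t$ together with the semigroup property, $\norm{\phi(s)-q(t)}=\norm{\phi(s,t,\phi(t))-\phi(s,t,q(t))}\le\norm{\phi(t)-q(t)}=\varepsilon(t)$ for all $s\ge t$. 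Hence $\norm{\phi(s)-\phi(s')}\le 2\varepsilon(t)$ for all $s,s'\ge t$, so $\phi(\cdot)$ is Cauchy and converges to some $x^\star$, which lies in the closed subspace $\mathcal{M}$ and is therefore an equilibrium. Finally, if $c>0$, letting $s\to\infty$ in the bound above gives $\norm{x^\star-q(t)}\le\varepsilon(t)$, hence $\norm{\phi(t)-x^\star}\le 2\varepsilon(t)$, and from Part~\ref{thss_1}, $\varepsilon(t)=\norm{V^\top V\phi(t)}\le\norm{V^\top}\norm{Vx_0}e^{-c(t-t_0)}$, which yields convergence to $x^\star$ at exponential rate $c$.
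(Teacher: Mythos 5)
Your reduction to the transverse system $\dot u = M(t)u$ with $M(t)=\int_0^1 VDf\big((I_n-V^\top V)\phi(t)+sV^\top u(t),t\big)V^\top\,ds$ is correct, and with it part (i) is exactly the standard Coppel/matrix-measure argument (the paper simply cites the partial-contraction literature for this). Part (iii) is also correct, and in fact your route is a cleaner, more elementary version of what the paper does: you show the trajectory is Cauchy directly from the two-sided bound $\|\phi(s)-q(t)\|\le\varepsilon(t)$ for $s\ge t$ (using that the orthogonal projection $q(t)=(I_n-V^\top V)\phi(t)$ is an equilibrium and the flow is non-expansive), whereas the paper packages the same estimate into a nested family of closed sets and invokes the Cantor Intersection Theorem. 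Both yield the limit $x^\star\in\mathcal{M}$ and the rate $2\varepsilon(t)\le 2\|V^\top\|\,\|Vx_0\|e^{-c(t-t_0)}$ when $c>0$.

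The genuine gap is in part (ii), and you have correctly diagnosed it yourself: you establish that $\|u(t)\|$ is non-increasing and that $\mu(M(t))<0$ for each $t$, but pointwise strict negativity of a time-varying matrix measure does not imply $u(t)\to 0$ (the measure may decay to zero along the trajectory, e.g.\ because the hypothesis $\mu(VDf((I_n-V^\top V)x,t)V^\top)<0$ is not uniform in $x$ or $t$, and $(I_n-V^\top V)\phi(t)$ need not stay in a compact set when $\mathcal{M}$ is not assumed to consist of equilibria). The ``Barbalat/LaSelle-type argument'' is left as a placeholder, so the proof of convergence to $\mathcal{M}$ in the $c=0$ case is not complete. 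The paper closes this with a different mechanism: it observes that the reduced system in $z=Vx$ is weakly contractive and that $z^*=\vect{0}_k$ is a \emph{locally} asymptotically stable equilibrium (via Coppel near $z=0$, using the strict hypothesis on $\mathcal{M}$), and then invokes a separate lemma (the generalization of the Lyapunov-type result in the appendix, Lemma~\ref{prop-weak-conv}) stating that for a weakly contractive system local asymptotic stability of an equilibrium implies global asymptotic stability. That lemma is proved by a compactness argument on a closed ball around the equilibrium: there is a uniform time $T$ after which every point of the ball enters the half-radius ball, and non-expansiveness then forces $\|z(t)\|$ to drop by a fixed amount every $T$ units of time until the trajectory enters the region of attraction. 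If you want to complete your part (ii) along your own lines, you would need an ingredient of exactly this kind; the pointwise estimate on $\mu(M(t))$ alone will not do it.
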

 
\begin{remark}
  Statement~\ref{thss_1} in Theorem~\ref{th-contr-sub} was proved
  in~\cite{QCP-JJS:07}.  To the best of our knowledge,
  statements~\ref{thss_1a} and~\ref{thss_2} are novel.
\end{remark}

\begin{proof}[Proof of Theorem~\ref{th-contr-sub}]
It is easy to check that $V^\top V$ is an orthogonal projection matrix onto $\mathcal{M}^\perp$; and that $U:=I_n-V^\top V$ is also an orthogonal projection matrix onto $\mathcal{M}$. 
%
Using these results, we can express the given system as $\dot{x}=f(Ux+V^\top Vx,t)$. Now, we set $z:=Vx$, and observe that $x(t)$ converges to $\mathcal{M}$ if and only if $z(t)$ converges $\vect{0}_{k}$. Then, using this change of coordinates, we obtain the system:
\begin{equation}
\label{syst_z}
\dot{z}=Vf(Ux+V^\top z,t).
\end{equation}
It has been proved in~\cite[Theorem~3]{QCP-JJS:07} that 
$z^*=\vect{0}_{k}$ is an equilibrium point for the
system~\eqref{syst_z}. 

To prove~\ref{thss_1a}, assume that
$\mu(VDf(x,t)V^\top)=0$ for any $(x,t)\in\R^n\times \real_{\ge 0}$; i.e.,
that the system~\eqref{syst_z} is non-expansive. Now, if we
assume that $\mu(VDf(Ux,t)V^\top)<0$ for any $(x,t)\in\R^n\times
\real_{\ge 0}$, then by the Coppel's inequality~\cite{WAC:1965}, the
fixed point $z^*=\vect{0}_k$ is locally exponentially stable. Now, we can use 
a generalization of~\cite[Lemma~6]{EL-GC-KS:14}, namely Lemma~\ref{prop-weak-conv} (proof found in the Appendix),  
to establish the convergence of $z(t)$ to $z^*$. This finishes the proof for~\ref{thss_1a}.

Now, we prove statement~\ref{thss_2}. 
  Let $t\mapsto x(t)$ be a trajectory of the dynamical system. For
  every $t\in \real_{\ge 0}$, $\big(I_n - V^{\top}V\big)x(t)$ is the orthogonal projection
  of $x(t)$ onto the subspace $\mathcal{M}$ and it is an equilibrium
  point. Since the dynamical system is non-expansive, we have
$\|x(s) - (I_n - V^{\top}V)x(t)\| \le \|x(t) -
    (I_n - V^{\top}V)x(t)\|=\|V^{\top}Vx(t)\|$, for all $s\ge t$. 
  This implies that, for every $t\in \real_{\ge 0}$ and every $s\ge t$, the point $x(s)$ remains
  inside the closed ball $\overline{B}(x(t),
  \|V^{\top}Vx(t)\|)$. Therefore, for every $t\ge 0$, the point
  $x(t)$ is inside the set $C_t$ defined by 
  $C_t = \mathrm{cl}\big(\bigcap_{\tau\in [0,t]} \overline{B}(x(\tau),
  \|V^{\top}Vx(\tau)\|)\big)$. 
It is easy to see that, for $s\ge t$, we have $C_{s}\subseteq
C_t$. This implies that the family $\{C_t\}_{t\in [0,\infty)}$ is a
nested family of closed subsets of $\real^n$. 
Moreover, by parts~\ref{thss_1} and~\ref{thss_1a}, we have that $\lim_{t\to\infty}\|V^{\top}Vx(t)\|\to 0$ as $t\to\infty$, which in turn results in $\lim_{t\to\infty}\mathrm{diam}(C_t) = 0$, with convergence rate $c$ for the case $c>0$ because of $\mathrm{diam}(C_t)=\|V^{\top}Vx(t)\|\le
2e^{-ct}\|V^{\top}Vx(0)\|$.
%
Thus, by the Cantor
Intersection Theorem~\cite[Lemma 48.3]{JM:00}, there exists $x^*\in \real^n$ such that
$\bigcap_{t\in [0,\infty)} C_t = \{x^*\}$. We first show that
$\lim_{t\to \infty} x(t) = x^*$. Note that $x^*,x(t)\in C_t$, for every
$t\in \real_{\ge 0}$. This implies that 
$\|x(t) - x^*\|\le
\mathrm{diam}(C_t)$. 
This in turn
means that $\lim_{t\to\infty} \|x(t) - x^*\|=0$ and $t\mapsto x(t)$
converges to $x^*$, with convergence rate $c$ for the case $c>0$. On the other hand, by part~\ref{thss_1}, the
trajectory $t\mapsto x(t)$ converges to the subspace $\mathcal{M}$. Therefore, $x^*\in \mathcal{M}$ and $x^*$ is an
equilibrium point of the dynamical system. 
This completes the proof for statement~\ref{thss_2}.
%
\end{proof}

%
   
\section{The standard 
optimization problem}
\label{sec:basic-prob}
We consider the constrained optimization problem:
\begin{equation}\label{eq:opt}
	\min_{x\in \real^n} \; f(x)\quad\text{ subject to }\quad Ax= b
%
\end{equation}
with the following standing assumptions: 
$A\in\R^{k\times n}$, $k<n$, $b\in\real^k$, $A$ is full-row rank,
and $\map{f}{\real^n}{\real}$ is convex and  twice differentiable.
%

Associated to the optimization problem~\eqref{eq:opt} is the
\emph{Lagrangian function} 
$\mathcal{L}(x,\nu)=f(x)+\nu^\top(Ax-b)$ and the \emph{primal-dual dynamics}
\begin{equation}\label{eq:pd}
  \begin{bmatrix}
    \dot{x}\\\dot{\nu}
  \end{bmatrix}=
  \begin{bmatrix}
    -\frac{\partial \mathcal{L}(x,\nu)}{\partial x}\\
    \frac{\partial \mathcal{L}(x,\nu)}{\partial \nu}
  \end{bmatrix}=
  \begin{bmatrix}
    -\nabla f(x)-A^\top\nu\\
    Ax-b
  \end{bmatrix}.
\end{equation}


We introduce two possible sets of
assumptions:
\begin{enumerate}[label=\textup{(A\arabic*)}]
\item\label{ass:1} the primal-dual dynamics~\eqref{eq:pd} have an
  equilibrium $(x^*,\nu^*)$ and $\nabla^2f(x^*)\succ \vect{0}_{n\times{n}}$;
\item\label{ass:2} the function $f$ is strongly convex with constant
  $\ell_{\inf}>0$ and Lipschitz smooth with constant $\ell_{\sup}>0$, and,
  for $0<\epsilon<1$, we define
\end{enumerate}
\begin{equation}
\begin{split}  
    &\alpha_{\epsilon} :=  \frac{\epsilon \ell_{\inf}}
          {\sigma^2_{\max}(A)+\frac{3}{4}\sigma_{\max}(A)\sigma^2_{\min}(A)+\ell_{\sup}^2}>0
          \\
          %
		&P := \begin{bmatrix} I_n&\alpha_\epsilon\;
            A^\top\\ \alpha_\epsilon\; A &I_k
         \end{bmatrix} \in\real^{(n+k)\times(n+k)}.
         \label{def:P}
\end{split}
\end{equation}
\begin{theorem}[Contraction analysis of primal-dual dynamics]\label{th-1}
  Consider the constrained optimization problem~\eqref{eq:opt}, its
  standing assumptions, and its associated primal-dual
  dynamics~\eqref{eq:pd}.
  \begin{enumerate}
  \item\label{1-1} The primal-dual dynamics
    is non-expansive with respect to $\norm{\cdot}_2$ and, if
    Assumption~\ref{ass:1} holds, then $(x^*,\nu^*)$ is globally asymptotically stable.
  \item \label{1-2} Under Assumption~\ref{ass:2},
  \begin{enumerate}
  \item\label{1-2-1} the primal-dual dynamics are contractive with respect
    to $\norm{\cdot}_{2,P^{1/2}}$ with contraction rate
    \begin{equation}
      \label{contr_1}
      \alpha_\epsilon\frac{3}{4}\frac{\sigma_{\max}(A)\sigma^2_{\min}(A)}{\sigma_{\max}(A)+1},
      \quad\text{and }
    \end{equation}
  \item\label{1-2-2} there exists a unique globally exponentially stable
    equilibrium point $(x^*,\nu^*)$, and $x^*$
    is the unique solution to the optimization
    problem~\eqref{eq:opt}.
  \end{enumerate}
  \end{enumerate}
\end{theorem}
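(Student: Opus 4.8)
The plan is to handle the three claims in turn, leaning on the Jacobian of the primal-dual dynamics and on results established earlier in the paper.

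For statement~\ref{1-1}, I would read off the Jacobian of~\eqref{eq:pd}, namely
\[
J(x)=\begin{bmatrix}-\nabla^2 f(x) & -A^\top\\ A & \vect{0}_{k\times k}\end{bmatrix},
\]
whose symmetric part is $\diag(-\nabla^2 f(x),\vect{0}_{k\times k})$. By convexity $\nabla^2 f(x)\succeq\vect{0}_{n\times n}$, so this matrix is negative semidefinite and $\mu_2(J(x))=0$ for every $x$; thus the dynamics has contraction rate $0$, i.e.\ is weakly contractive, with respect to $\norm{\cdot}_2$. Under Assumption~\ref{ass:1}, weak contractivity makes $t\mapsto\norm{\phi(t,t_0,x_0)-(x^*,\nu^*)}_2$ non-increasing, so every trajectory is bounded and complete; moreover, since $\nabla^2 f(x^*)\succ\vect{0}_{n\times n}$ and $A$ has full row rank, Proposition~\ref{prop-neg-eig} shows $J(x^*)$ is Hurwitz, hence $(x^*,\nu^*)$ is locally exponentially stable. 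I would then apply Lemma~\ref{prop-weak-conv}, exactly as in the proof of Theorem~\ref{th-contr-sub}, part~\ref{thss_2}, to upgrade ``weakly contractive together with a locally exponentially stable equilibrium'' to global asymptotic stability of $(x^*,\nu^*)$.

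For statement~\ref{1-2-1}, I would first verify $P\succ0$ --- equivalently $\alpha_\epsilon\sigma_{\max}(A)<1$ --- which follows from $0<\epsilon<1$, the elementary bound $\ell_{\inf}\sigma_{\max}(A)\le\tfrac12(\ell_{\inf}^2+\sigma_{\max}^2(A))$, and $\ell_{\inf}\le\ell_{\sup}$. Then, using that (by a congruence computation with $P^{1/2}$ and the formulas for $\mu_{2,Q}$ and $\mu_2$ in Section~\ref{sec:prelim}) $\mu_{2,P^{1/2}}(J(x))\le -c$ is equivalent to $P\,J(x)+J(x)^\top P\preceq -2cP$, and writing $H:=\nabla^2 f(x)$, a block multiplication gives
\[
P\,J(x)+J(x)^\top P=\begin{bmatrix}-2H+2\alpha_\epsilon A^\top A & -\alpha_\epsilon HA^\top\\ -\alpha_\epsilon AH & -2\alpha_\epsilon AA^\top\end{bmatrix}.
\]
The task then reduces to showing that the quadratic form of this matrix plus $2cP$, at an arbitrary $(u,w)\in\R^n\times\R^k$, is nonpositive. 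The plan is to bound it term by term using $\ell_{\inf}I_n\preceq H\preceq\ell_{\sup}I_n$, $\sigma_{\min}(A)\norm{w}_2\le\norm{A^\top w}_2$, $\norm{Au}_2\le\sigma_{\max}(A)\norm{u}_2$, and Young's inequality with carefully chosen weights on the two cross terms $-\alpha_\epsilon u^\top HA^\top w$ and $2c\,\alpha_\epsilon u^\top A^\top w$, so that the $u$-contribution is dominated by $-2\ell_{\inf}\norm{u}_2^2$ and the $w$-contribution by $-2\alpha_\epsilon\sigma_{\min}^2(A)\norm{w}_2^2$; the denominator $\sigma_{\max}^2(A)+\tfrac34\sigma_{\max}(A)\sigma_{\min}^2(A)+\ell_{\sup}^2$ defining $\alpha_\epsilon$ and the factor $\tfrac34$ are exactly the bookkeeping that leaves a residual at least $2cP$ with $c$ equal to the rate~\eqref{contr_1}.

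For statement~\ref{1-2-2}: once statement~\ref{1-2-1} holds, under Assumption~\ref{ass:2} the vector field of~\eqref{eq:pd} is globally Lipschitz (the gradient is $\ell_{\sup}$-Lipschitz and the remaining terms are affine), so the dynamics is forward complete, and a time-invariant system contractive with respect to $\norm{\cdot}_{2,P^{1/2}}$ has a unique equilibrium $(x^*,\nu^*)$ to which every trajectory converges with exponential rate~\eqref{contr_1}. Finally, the equilibrium equations $\nabla f(x^*)+A^\top\nu^*=\vect{0}_n$ and $Ax^*=b$ are the KKT conditions of~\eqref{eq:opt}: since $f$ is convex and the constraints affine they are sufficient for global optimality, and strong convexity of $f$ makes the minimizer unique, so $x^*$ is the unique solution of~\eqref{eq:opt}. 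I expect statement~\ref{1-2-1} to be the main obstacle: the cross terms must be split so that, after the spectral bounds on $H$ and $A$, what remains is still $\preceq -2cP$ with $c$ matching~\eqref{contr_1} exactly, and extracting the factor $\tfrac34\,\sigma_{\max}(A)/(\sigma_{\max}(A)+1)$ is delicate; statements~\ref{1-1} and~\ref{1-2-2} are essentially assembly of results already in hand.
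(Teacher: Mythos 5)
Your proposal follows essentially the same route as the paper: statement~\ref{1-1} via $\mu_2$ of the Jacobian's symmetric part plus Proposition~\ref{prop-neg-eig} and Lemma~\ref{prop-weak-conv}, statement~\ref{1-2-1} via the weighted norm $\norm{\cdot}_{2,P^{1/2}}$ with the same $P$ and the same Young's-inequality bookkeeping (you phrase it as the pointwise LMI $PJ+J^\top P\preceq -2cP$, the paper as the equivalent integral contractivity condition, and your block computation of $PJ+J^\top P$ and your argument that \eqref{dd3}-type bounds imply $P\succ0$ are exactly the paper's), and statement~\ref{1-2-2} via uniqueness of the equilibrium of a contractive time-invariant system plus the KKT conditions. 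The one place you stop short is the final algebra extracting the exact rate \eqref{contr_1} --- the paper does this by setting $c=D\alpha$ and bounding $1+\alpha\le(\sigma_{\max}(A)+1)/\sigma_{\max}(A)$ to isolate the factor $\tfrac34\sigma_{\max}(A)\sigma_{\min}^2(A)/(\sigma_{\max}(A)+1)$ --- but your sketch of where the $\tfrac34$ and the denominator of $\alpha_\epsilon$ come from is consistent with that derivation.
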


\begin{proof}
Let $(\dot{x},\dot{\nu})^\top:=\fPD(x,\nu)$. 
Then, $D\fPD(x,\nu)=\begin{bmatrix}
  -\nabla^2f(x) & -A^\top\\ A & 0
\end{bmatrix}$, and so 
$\mu_2(D\fPD(x,\nu))=\lambda_{\max}\left((D\fPD(x,v)+D\fPD(x,v)^\top)/2\right)
=\lambda_{\max}\left(\diag(
-\nabla^2f(x),\vect{0}_{k\times k})\right)=0$ 
for any $(x,\nu)\in\R^n\times\R^m$, because of convexity $\nabla^2f(x)\succeq 0$, which implies 
the system is non-expansive.
For the second part of statement~\ref{1-1}: Proposition~\ref{prop-neg-eig} implies $D\fPD(x^*,\nu^*)$ is Hurwitz since $\nabla^2f(x^*)\succ 0$, and the proof follows from 
a simple generalization of~\cite[Lemma~6]{EL-GC-KS:14} (its proof can be found in the Appendix).

Now, we prove statement~\ref{1-2}. 
Define $P=\begin{bmatrix}
I_n&\alpha A^\top\\
\alpha A &I_k
\end{bmatrix}$ which is a positive-definite matrix when 
\begin{equation}
\label{dd0}
0<\alpha<\frac{1}{\sigma_{\max}(A)}. 
\end{equation}
We plan to use the integral contractivity condition to show that system~\eqref{eq:pd} is contractive with respect to norm 
$\norm{\cdot}_{2,P^{1/2}}$. 
Thus, we need to show 
\begin{align*}
\eta&:=\begin{bmatrix}
x_1-x_2\\\nu_1-\nu_2
\end{bmatrix}^\top P(\fPD(x_1,\nu_1)-\fPD(x_2,\nu_2))\\
&+c\begin{bmatrix}
x_1-x_2\\\nu_1-\nu_2
\end{bmatrix}^\top P\begin{bmatrix}
x_1-x_2\\\nu_1-\nu_2
\end{bmatrix}\leq 0
\end{align*}
for any $x_1,x_2\in\R^n$ and $\nu_1,\nu_2\in\R^m$, and some constant $c>0$ which will be the contraction rate. After completing squares, using the strong convexity and Lipschitz smoothness of $f$, along with $\sigma^2_{\min}(A)I_k\preceq AA^\top$ and $A^\top A\preceq \sigma^2_{\max}(A)I_n$, we obtain
\begin{align*}
%
%
\eta&\leq -\left((3\alpha/4)\sigma^2_{\min}(A)-c-c\alpha\right)\norm{\nu_1-\nu_2}_2^2-(\ell_{\inf}\\
&\;-\alpha\sigma^2_{\max}(A)-c-\alpha\ell^2_{\sup}-c\alpha\sigma^2_{\max}(A))\norm{x_1-x_2}_2^2\\
&\;-\alpha c\norm{(\nu_1-\nu_2)-A(x_1-x_2)}_2^2.
\end{align*}
Set $c=D\alpha$ for some $D>0$. Then, to ensure that $\eta\leq 0$, we need to ensure 
\begin{align}
&\frac{3\alpha}{4}\sigma^2_{\min}(A)-D\alpha-D\alpha^2\ge 0,\label{dd}\\
&\ell_{\inf}-\alpha\sigma^2_{\max}(A)-D\alpha-\alpha\ell^2_{\sup}-D\alpha^2\sigma^2_{\max}(A)\ge
  0.\label{dd1}
\end{align}
Now, to ensure inequality~\eqref{dd} holds, using the inequalities~\eqref{dd0}, 
it is easy to see that it suffices to ensure that 
\begin{equation}
\label{dd2}
\frac{3\sigma_{\max}(A)\sigma^2_{\min}(A)}{4(\sigma_{\max}(A)+1)}>D.
\end{equation}
Now, using inequalities~\eqref{dd0} and~\eqref{dd2}, we obtain: 
$\ell_{\inf}-\alpha\sigma^2_{\max}(A)-D\alpha-\alpha\ell^2_{\sup}-D\alpha^2\sigma^2_{\max}(A)>\ell_{\inf}-\alpha(\sigma^2_{\max}(A)+\frac{3}{4}\sigma_{\max}(A)\sigma_{\min}^2(A)+\ell_{\sup}^2)$ 
and so, to ensure inequality~\eqref{dd1} holds, it suffices that 
\begin{equation}
\label{dd3}
\frac{\ell_{\inf}}{\sigma^2_{\max}(A)+\frac{3}{4}\sigma_{\max}(A)\sigma^2_{\min}(A)+\ell_{\sup}^2}>\alpha.\end{equation}
Now, the parameter $\alpha$ needs to satisfy inequalities~\eqref{dd0} and~\eqref{dd3}; however, \eqref{dd3} implies~\eqref{dd0} because the inequality $\pi_1^2+\pi_2^2\geq 2\pi_1\pi_2$ for $\pi_1,\pi_2>0$ let us conclude that $\frac{\ell_{\sup}}{\sigma^2_{\max}(A)+\ell^2_{\sup}}\leq\frac{1}{2\sigma_{\max}(A)}$. 
Finally, $c$ must be less than the multiplication of the left-hand sides of the inequalities~\eqref{dd2} and~\eqref{dd3}, which proves statement~\ref{1-2-1}. 

Now, since the dynamics are contractive, there must exist a unique globally exponentially stable equilibrium point which also satisfies the (sufficient and necessary) KKT conditions of optimality for the optimization problem~\eqref{eq:opt}, thus proving statement~\ref{1-2-2}.
\end{proof}


\begin{figure}[t]
  \centering
\includegraphics[width=0.85\linewidth]{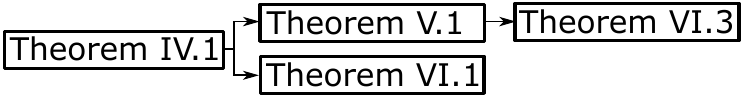}
\caption{Dependency relationship among the main theorems in this note. An arrow from $A$ to $B$ means $A$ is used to prove $B$.
}\label{fig:rel}
\end{figure} 
\begin{remark}\label{remark-1}
Theorem~\ref{th-1} is a fundamental
  building block for the rest of results in this paper as seen in Fig.~\ref{fig:rel} and therefore, it was necessary to provide a comprehensive proof
  using the integral contractivity condition that could provide an explicit estimate of the contraction rate (as opposed to the different proof in~\cite{HDN-TLV-KT-JJS:18}).
  
    Throughout this note, 
  the Lipschitz smoothness and strong convexity of $f$ 
  are used to prove contraction.  
  However, the latter is relaxed in Corollary~\ref{th-3}.
\end{remark}

For the case of convex $f$, Theorem~\ref{th-1} does not state convergence - nor contraction - without additional assumptions; indeed, 
oscillations may appear 
and convergence to
the saddle points is not guaranteed~\cite{DF-FP:10}.
%
%
%
%
In order to still be able to use Theorem~\ref{th-1} in this case, we 
consider
a modification to the
Lagrangian, 
known as the \emph{augmented
  Lagrangian}~\cite{RTR:76}: 
$\mathcal{L}_{\aug}(x,\nu)=\mathcal{L}(x,\nu)+\frac{\rho}{2}\norm{Ax-b}^2_2$ 
with gain $\rho> 0$. Its associated \emph{augmented primal-dual dynamics}
become
\begin{equation}\label{eq:pd_aug}
\begin{bmatrix}
\dot{x}\\\dot{\nu}
\end{bmatrix}=
\begin{bmatrix}
-\nabla f(x)-A^\top\nu-\rho A^\top A x+\rho A^\top b\\
Ax-b
\end{bmatrix}
\end{equation} 
and 
have the same equilibria as the original one
in~\eqref{eq:pd}. 
We introduce two possible sets of assumptions:
\begin{enumerate}[label=\textup{(A\arabic*)}]\setcounter{enumi}{2}
\item\label{ass:3} the primal-dual dynamics~\eqref{eq:pd} have an
  equilibrium $(x^*,\nu^*)$, $\nabla^2f(x^*)\succeq
\vect{0}_{n\times{n}}$, and $\ker(\nabla^2f(x^*))\cap\ker(A)=\{\vect{0}_n\}$;
  
\item\label{ass:4} $\ker(\nabla^2f(x))\cap\ker(A)=\{\vect{0}_n\}$ for any
  $x\in\R^n$ and $f$ is Lipschitz smooth with constant $\ell_{\sup}>0$,
  and, for $0<\epsilon<1$, we define
\end{enumerate}
  \begin{equation}
  \begin{split}
    &\bar{\alpha}_{\epsilon} :=  \frac{\epsilon \rho \sigma_{\min}^2(A)}
          {(1+\rho)\sigma^2_{\max}(A)+\frac{3}{4}\sigma_{\max}(A)\sigma^2_{\min}(A)+\ell_{\sup}^2}\\
     &\bar{P} := \begin{bmatrix} I_n&\bar{\alpha}_\epsilon\;
            A^\top\\ \bar{\alpha}_\epsilon\; A &I_k
         \end{bmatrix} \in\real^{(n+k)\times(n+k)}.
         \label{def:Pbar}
    \end{split}
  \end{equation}
\begin{corollary}[Contraction analysis of the augmented primal-dual dynamics]
  \label{th-3}
  Consider the constrained optimization problem~\eqref{eq:opt}, its
  standing assumptions, and its associated augmented primal-dual
  dynamics~\eqref{eq:pd_aug} with $\rho>0$.
  \begin{enumerate}
  \item\label{3-1} Under Assumption~\ref{ass:3}, the augmented primal-dual
    dynamics are non-expansive with respect to $\norm{\cdot}_2$ and
    $(x^*,\nu^*)$ is globally asymptotically stable.
  \item \label{3-2} Under Assumption~\ref{ass:4},
    \begin{enumerate}
    \item\label{3-2-1} the augmented primal-dual dynamics are contractive
      with respect to $\norm{\cdot}_{2,\bar{P}^{1/2}}$ with contraction rate
      \begin{equation}
      \label{contr_1bar}
      \bar{\alpha}_\epsilon\frac{3}{4}\frac{\sigma_{\max}(A)\sigma^2_{\min}(A)}{\sigma_{\max}(A)+1},
      \quad\text{and }
    \end{equation}
  \item\label{3-2-2} there exists a unique globally exponentially stable
    equilibrium point $(x^*,\nu^*)$ for the augmented primal-dual dynamics
    and $x^*$ is the unique solution to the constrained optimization
    problem~\eqref{eq:opt}.
  \end{enumerate}
  \end{enumerate}
\end{corollary}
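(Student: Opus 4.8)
\emph{Proof plan.} The strategy is to reduce to Theorem~\ref{th-1} by observing that the augmented primal-dual dynamics~\eqref{eq:pd_aug} are exactly the standard primal-dual dynamics~\eqref{eq:pd} of problem~\eqref{eq:opt} with $f$ replaced by $\tilde f(x):=f(x)+\tfrac{\rho}{2}\norm{Ax-b}_2^2$: indeed $-\nabla\tilde f(x)-A^\top\nu=-\nabla f(x)-\rho A^\top(Ax-b)-A^\top\nu$. Since $f$ is convex and twice differentiable, so is $\tilde f$, and the Jacobian of~\eqref{eq:pd_aug} has the block form appearing in the proof of Theorem~\ref{th-1} with $\nabla^2 f(x)$ replaced by $B(x):=\nabla^2\tilde f(x)=\nabla^2 f(x)+\rho A^\top A\succeq\vect 0_{n\times n}$. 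For statement~\ref{3-1}, convexity of $\tilde f$ makes Theorem~\ref{th-1}\ref{1-1} applicable to $\tilde f$, giving weak contractivity of~\eqref{eq:pd_aug} with respect to $\norm{\cdot}_2$. Under~\ref{ass:3}, the point $(x^*,\nu^*)$ is an equilibrium of~\eqref{eq:pd_aug} (the two dynamics share equilibria) and $B(x^*)\succ\vect 0_{n\times n}$, because $v^\top B(x^*)v=v^\top\nabla^2 f(x^*)v+\rho\norm{Av}_2^2=0$ forces $v\in\ker(\nabla^2 f(x^*))\cap\ker(A)=\{\vect 0_n\}$; thus Assumption~\ref{ass:1} holds for $\tilde f$ and Theorem~\ref{th-1}\ref{1-1} yields global asymptotic stability of $(x^*,\nu^*)$ (equivalently, Proposition~\ref{prop-neg-eig} with $B\leftarrow B(x^*)$, $b_1=\lambda_{\min}(B(x^*))$, $b_2=\lambda_{\max}(B(x^*))$ gives a Hurwitz Jacobian at the equilibrium, and one closes with the generalization of~\cite[Lemma~6]{EL-GC-KS:14}).

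For statement~\ref{3-2-1}, one cannot simply invoke Theorem~\ref{th-1}\ref{1-2} for $\tilde f$: that route uses the Lipschitz-smoothness constant $\ell_{\sup}+\rho\sigma_{\max}^2(A)$ of $\tilde f$ and produces a coarser rate than the one claimed. Instead I would re-run the integral-contractivity computation in the proof of Theorem~\ref{th-1}, with $P$ replaced by $\bar P$ and the term $-\rho A^\top A(x_1-x_2)$ from~\eqref{eq:pd_aug} kept separate from $\nabla f(x_1)-\nabla f(x_2)$. Two changes occur: (a) the positivity previously coming from strong convexity, $(\nabla f(x_1)-\nabla f(x_2))^\top(x_1-x_2)\ge\ell_{\inf}\norm{x_1-x_2}_2^2$, is now supplied by Assumption~\ref{ass:4} in the form $(x_1-x_2)^\top B(\xi)(x_1-x_2)\ge\rho\sigma_{\min}^2(A)\norm{x_1-x_2}_2^2$, so $\rho\sigma_{\min}^2(A)$ plays the role of $\ell_{\inf}$; and (b) the extra cross-terms created by $-\rho A^\top A(x_1-x_2)$, controlled by $\norm{\rho A^\top A(x_1-x_2)}_2\le\rho\sigma_{\max}^2(A)\norm{x_1-x_2}_2$, are absorbed by completing squares at the cost of replacing the $\sigma_{\max}^2(A)$ of that proof by $(1+\rho)\sigma_{\max}^2(A)$ in the counterpart of~\eqref{dd1}. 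After setting $c=D\bar\alpha_\epsilon$, the analogues of~\eqref{dd} and~\eqref{dd1} reduce, exactly as before, to $\tfrac{3\sigma_{\max}(A)\sigma_{\min}^2(A)}{4(\sigma_{\max}(A)+1)}>D$ and to $\bar\alpha_\epsilon$ as defined in~\eqref{def:Pbar}, while positive-definiteness of $\bar P$, i.e.\ $0<\bar\alpha_\epsilon<1/\sigma_{\max}(A)$, is checked as in the proof of Theorem~\ref{th-1} using $\epsilon<1$ and $\pi_1^2+\pi_2^2\ge2\pi_1\pi_2$. For~\ref{3-2-2}: a contractive time-invariant system has a unique, globally exponentially stable equilibrium; since~\eqref{eq:pd_aug} and~\eqref{eq:pd} have the same equilibria, that point satisfies the KKT conditions of~\eqref{eq:opt}, which are sufficient for optimality since $f$ is convex and $Ax=b$ is affine, so $x^*$ is the unique solution of~\eqref{eq:opt}.

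The main obstacle is step (a): passing from the \emph{pointwise} nondegeneracy $\ker(\nabla^2 f(x))\cap\ker(A)=\{\vect 0_n\}$ of Assumption~\ref{ass:4} to a \emph{uniform} lower bound $\nabla^2 f(x)+\rho A^\top A\succeq\rho\sigma_{\min}^2(A)I_n$ along the whole state space. This is precisely where the augmented term $\tfrac{\rho}{2}\norm{Ax-b}_2^2$ is essential: decompose $x_1-x_2$ into its component in the row space of $A$, handled by $\norm{A(\cdot)}_2^2\ge\sigma_{\min}^2(A)\norm{\cdot}_2^2$, and its component in $\ker(A)$, handled by the curvature of $f$, which Assumption~\ref{ass:4} keeps nondegenerate there; one must argue these combine to the stated uniform constant. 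The delicate issue is that mere pointwise positive-definiteness of $B(x)$ does not by itself yield a uniform bound, so Assumption~\ref{ass:4} has to be read as delivering exactly this uniform estimate; I would make that the first lemma of the proof and then proceed mechanically as above.
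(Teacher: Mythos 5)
Your overall route is the paper's: rewrite~\eqref{eq:pd_aug} as the standard dynamics~\eqref{eq:pd} for the augmented objective $\tilde f(x)=f(x)+\tfrac{\rho}{2}\norm{Ax-b}_2^2$ and invoke Theorem~\ref{th-1}. Your treatment of statement~\ref{3-1} is correct and matches the paper, and you are right---and more careful than the paper---that the rate~\eqref{contr_1bar} is not obtained by blindly substituting the Lipschitz constant of $\tilde f$ into $\alpha_\epsilon$; the $(1+\rho)\sigma_{\max}^2(A)$ term is most naturally explained by keeping the $\rho A^\top A$ contribution separate in the integral-contractivity computation, as you propose.

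The genuine problem is the step you yourself single out as ``the main obstacle,'' and your sketch does not close it: the claim that Assumption~\ref{ass:4} yields $\nabla^2 f(x)+\rho A^\top A\succeq \rho\sigma_{\min}^2(A)\, I_n$ uniformly in $x$. This cannot be derived from~\ref{ass:4} as stated. For $v\in\ker(A)$ (which is nontrivial since $k<n$) the augmentation contributes nothing: $v^\top(\nabla^2 f(x)+\rho A^\top A)v=v^\top\nabla^2 f(x)v\le \ell_{\sup}\norm{v}_2^2$, so the asserted modulus $\rho\sigma_{\min}^2(A)$ is already violated once $\rho>\ell_{\sup}/\sigma_{\min}^2(A)$. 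More fundamentally, the kernel condition is pointwise and qualitative: the infimum over $x$ of the curvature of $f$ along $\ker(A)$ may be zero even though each $\nabla^2 f(x)$ is nonsingular on $\ker(A)$. Your proposed row-space/kernel decomposition controls the row-space component but leaves the kernel component with no quantitative lower bound, so the ``first lemma'' you plan to prove is false as stated. (To be fair, the paper's own proof commits exactly this leap---it asserts strong convexity of $\tilde f$ with constant $\rho\sigma_{\min}^2(A)$ directly from the kernel condition---so you have correctly located the weak point rather than introduced a new one.) A repairable version of statement~\ref{3-2} requires a genuinely uniform hypothesis, e.g., $\nabla^2 f(x)+\rho A^\top A\succeq m I_n$ for some $m>0$ and all $x$, with $m$ replacing $\rho\sigma_{\min}^2(A)$ in $\bar\alpha_\epsilon$; with that in hand, the rest of your plan, including~\ref{3-2-2}, goes through exactly as in Theorem~\ref{th-1}.
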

%
%
%
\begin{proof}
The proof follows directly from Theorem~\ref{th-1}. For statement~\ref{3-1}, note that $\ker(\nabla^2f(x^*))\cap\ker(A)=\{\vect{0}_n\}$ implies that $\nabla^2f(x^*)+\rho A^\top A\succ \vect{0}_{n\times{n}}$ for the Jacobian of the system $$\begin{bmatrix}
-\nabla^2f(x)-\rho A^\top A & -A^\top\\
A & 0
\end{bmatrix}.$$
For statement~\ref{3-2}, note that $\ker(\nabla^2f(x))\cap\ker(A)=\{\vect{0}_n\}$ for any $x\in\R^n$ implies that 
$x\mapsto f(x)+\frac{\rho}{2}x^\top A^\top Ax$ is Lipschitz smooth with constant $\ell_{\sup}^2+\rho\sigma^2_{\max}(A)>0$ and strongly convex with constant $\rho\sigma_{\min}^2(A)>0$.
\end{proof}

\begin{remark}[Augmented Lagrangian and contraction] The benefit of
    using the augmented Lagrangian is that, unlike the conditions in
    Theorem~\ref{th-1}, the resulting primal-dual dynamics may be contractive despite
    $f$ being only convex.
\end{remark}

\section{Distributed algorithms}
\label{sec:ditrib-prob}
\label{dec-dist}

%
We study a popular distributed implementation for solving an unconstrained optimization problem
~\cite{TY-XY-JW-DW:19}. 
%
We want to solve the problem $\min_{x\in \R^{n}}f(x)=\sum_{i=1}^Nf_i(x)$ with $f_i:\R^n\to\R$ convex. 
Let $\mathcal{G}$ be an undirected connected 
interaction graph between $N$ distinct agents. 
Let $\mathcal{N}_i$ be the neighborhood of node $i$ and $L$ be the 
Laplacian matrix of $\mathcal{G}$. Let $x^i\in\R^n$ be the state 
associated to 
agent $i$, and let $\textbf{x}=(x^1,\dots,x^N)^\top$. Then, 
the problem becomes: 
\begin{equation}\label{eq:opt_dec_cent}
%
%
\begin{split}
	\min_{\textbf{x}\in \R^{nN}} \quad &\sum_{i=1}^Nf_i(x^i)\\
	&(L\otimes I_n)\textbf{x}= \vect{0}_{nN}\\
\end{split}.
\end{equation}
The associated \emph{distributed primal-dual dynamics} are 
\begin{equation}\label{eq:pd_dec}
\begin{split}
	\dot{x}^i&
	=-\nabla_{x^i}f_i(x^i)-\sum_{j\in\mathcal{N}_i}(\nu^j-\nu^i)\\
	\dot{\nu}^i&=\sum_{j\in\mathcal{N}_i}(x^j-x^i)
\end{split}
\end{equation}
for $i\in\until{N}$. In system~\eqref{eq:pd_dec}, 
any agent only uses information from herself and the set of her neighbors.

To study this system, we introduce two possible sets of
assumptions:
\begin{enumerate}[label=\textup{(A\arabic*)}]\setcounter{enumi}{4}
\item\label{ass:5} $\min_{x\in \R^{n}}f(x)$ has a solution $x^*$ and $\nabla^2f_i(x^*)\succ \vect{0}_{n\times{n}}$ for any $i\in\{1,\dots,N\}$;
\item\label{ass:6} $\min_{x\in \R^{n}}f(x)$ has a solution $x^*$ and the function $f_i$ is strongly convex with constant $\ell_{\inf,i}>0$ and Lipschitz smooth with constant $\ell_{\sup,i}>0$ for any $i\in\{1,\dots,N\}$, with \\$\ell_{\inf}=(\ell_{\inf,1},\dots,\ell_{\inf,N})$ and $\ell_{\sup}=(\ell_{\sup,1},\dots,\ell_{\sup,N})$.
\end{enumerate}

With either assumption, note that we cannot apply Theorem~\ref{th-1} directly since the linear equality constraint in~\eqref{eq:opt_dec_cent} is not full-row rank. However, if we instead consider partial contraction, then Theorem~\ref{th-1} can be used to prove the next result.
%

\begin{theorem}[Contraction analysis of distributed primal-dual dynamics]
\label{th-dec}
Consider the distributed primal-dual dynamics~\eqref{eq:pd_dec}. 
\begin{enumerate}
\item\label{desc-1-1} The distributed primal-dual dynamics are non-expansive with respect to $\norm{\cdot}_2$, and 
\item\label{desc-1-2}under Assumption~\ref{ass:5}, for any
  $(x^i(0),\nu^i(0))\in\R^n\times{\R^n}$,
  $\lim_{t\to\infty}x^i(t)=x^*$ and
  $\lim_{t\to\infty}\nu^i(t)=\nu^*_i$, for some $\nu_i^*$ such that $\sum^N_{k=1}\nu_k^*=\sum^N_{k=1}\nu^k(0)$.
\item \label{desc-2}Under Assumption~\ref{ass:6}, the convergence results in statement~\ref{desc-1-2} hold and, for $0<\epsilon<1$, the convergence of $(\textbf{x}(t),\nu(t))^\top$ has exponential rate 
\begin{equation}
\label{contr_dec_rate}
\frac{3\epsilon}{4}\frac{\lambda_N \lambda^2_2}{\lambda_N+1}\frac{\min_{i\in\until{N}}\ell_{\inf,i}}{\lambda_N^2+\frac{3}{4}\lambda_N \lambda^2_2+\norm{\ell_{\sup}}_{\infty}^2},
\end{equation}
where $\lambda_2$ and $\lambda_N$ are the smallest non-zero
and the largest eigenvalues of $L$, respectively.
%
%
%
%
\end{enumerate}
\end{theorem}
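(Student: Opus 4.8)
The plan is to cast the distributed primal-dual dynamics~\eqref{eq:pd_dec} in compact form and then apply the partial contraction machinery of Theorem~\ref{th-contr-sub} together with the contraction estimates of Theorem~\ref{th-1}. Writing $\textbf{x}=(x^1,\dots,x^N)$, $\nu=(\nu^1,\dots,\nu^N)$, and stacking, the dynamics read
\begin{equation*}
\begin{bmatrix}\dot{\textbf{x}}\\\dot{\nu}\end{bmatrix}=\begin{bmatrix}-\nabla F(\textbf{x})-(L\otimes I_n)\nu\\(L\otimes I_n)\textbf{x}\end{bmatrix},
\end{equation*}
where $F(\textbf{x})=\sum_i f_i(x^i)$ and $\nabla F(\textbf{x})=(\nabla f_1(x^1),\dots,\nabla f_N(x^N))$. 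The Jacobian is $\begin{bmatrix}-\nabla^2 F(\textbf{x}) & -(L\otimes I_n)\\ (L\otimes I_n) & 0\end{bmatrix}$, whose symmetric part is $\diag(-\nabla^2 F(\textbf{x}),0)\preceq 0$ by convexity of each $f_i$; this gives statement~\ref{desc-1-1} (weak contractivity in $\norm{\cdot}_2$) exactly as in the proof of Theorem~\ref{th-1}\ref{1-1}.

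For statements~\ref{desc-1-2} and~\ref{desc-2} the key is to restrict attention to the subspace orthogonal to the spurious kernel directions. The "consensus + conserved-sum" structure of the Laplacian means the relevant flow-invariant subspace is $\mathcal{M}=\setdef{(\textbf{x},\nu)}{(\vectorones[N]^\top\otimes I_n)\nu = \text{const}}$ — more precisely, I would change coordinates using an orthonormal basis that diagonalizes $L$: write $L = S\Lambda S^\top$ with $S$ orthogonal, $\Lambda=\diag(0,\lambda_2,\dots,\lambda_N)$, and apply $S^\top\otimes I_n$ to both $\textbf{x}$ and $\nu$. In the new coordinates the $\lambda_1=0$ block decouples: the $\nu$-component along $\vectorones[N]$ is conserved (yielding $\sum_k\nu_k^*=\sum_k\nu^k(0)$), and the $\textbf{x}$-component along $\vectorones[N]$ evolves as $\dot{\bar x}=-\tfrac{1}{\sqrt N}\sum_i\nabla f_i(\bar x/\sqrt N+\dots)$ coupled only through the averaged gradient. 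The remaining $n(N-1)$ coordinates corresponding to $\lambda_2,\dots,\lambda_N$ form, on each eigenspace, a system of exactly the form~\eqref{eq:pd} with "constraint matrix" $\sqrt{\lambda_j}\,I_n$ (full rank) and objective whose Hessian is a convex combination / block of the $\nabla^2 f_i$. Applying Theorem~\ref{th-contr-sub} with $V$ the projection onto these coordinates, and Theorem~\ref{th-1}\ref{1-1}/\ref{1-2} on each $\lambda_j$-block, yields convergence of $V\phi$ to zero — i.e. all disagreement variables and all dual variables off the conserved direction vanish. Then Theorem~\ref{th-contr-sub}\ref{thss_2}, using weak contractivity from part~\ref{desc-1-1} and the fact that $\mathcal{M}$ is a set of equilibria, upgrades this to convergence of the full trajectory to a single equilibrium point; by the KKT conditions for~\eqref{eq:opt_dec_cent} this equilibrium has $x^i=x^*$ for all $i$.

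For the exponential rate in statement~\ref{desc-2}, under Assumption~\ref{ass:6} I would apply the rate formula~\eqref{contr_1} of Theorem~\ref{th-1}\ref{1-2-1} to each $\lambda_j$-block. On the $\lambda_j$-block the effective constraint matrix has $\sigma_{\max}=\sigma_{\min}=\sqrt{\lambda_j}$, the effective strong-convexity constant is bounded below by $\min_i\ell_{\inf,i}$ and the effective Lipschitz constant above by $\max_i\ell_{\sup,i}=\norm{\ell_{\sup}}_\infty$; substituting $\sigma_{\max}(A)=\sqrt{\lambda_j}$, $\sigma_{\min}(A)=\sqrt{\lambda_j}$, $\ell_{\inf}=\min_i\ell_{\inf,i}$, $\ell_{\sup}=\norm{\ell_{\sup}}_\infty$ into~\eqref{contr_1} and taking the worst (smallest) rate over $j\in\{2,\dots,N\}$ gives the claimed bound, since $j\mapsto \tfrac{3\epsilon}{4}\tfrac{\lambda_j^2}{\lambda_j+1}\cdot\tfrac{\min_i\ell_{\inf,i}}{\lambda_j^2+\frac34\lambda_j^2+\norm{\ell_{\sup}}_\infty^2}$ is minimized — after the appropriate monotonicity check — at $\lambda_j=\lambda_N$ (the terms grow with $\lambda_j$ in a way that the ratio is decreasing for $\lambda_j$ large, matching the $\lambda_N$ in~\eqref{contr_dec_rate}); the partial contraction rate is this minimum, and by part~\ref{thss_2} of Theorem~\ref{th-contr-sub} the whole trajectory inherits it.

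The main obstacle I expect is making the block-diagonalization argument fully rigorous: the map $\textbf{x}\mapsto \nabla F(\textbf{x})$ is block-diagonal in the \emph{agent} index but \emph{not} in the eigenbasis of $L$, so after conjugating by $S^\top\otimes I_n$ the Hessian $\nabla^2 F$ becomes a dense $nN\times nN$ matrix coupling all eigen-blocks. Hence one cannot literally decouple into independent $\lambda_j$-subsystems; instead one must verify the integral contractivity inequality for the full $V$-projected Jacobian $VDf V^\top$ directly, bounding the cross terms and showing the relevant symmetric part is $\preceq -c\,(\text{identity})$ with $c$ the rate in~\eqref{contr_dec_rate}. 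This is the same style of completing-the-squares computation as in the proof of Theorem~\ref{th-1}, but carried out with $L$ (restricted to its range) playing the role of $A$ and with $\lambda_2^2 I \preceq (L|_{\mathrm{range}})^2 \preceq \lambda_N^2 I$ replacing the $\sigma_{\min},\sigma_{\max}$ bounds on $A$ — the choice of weight matrix being $\bar P = \begin{bmatrix} I & \bar\alpha_\epsilon L\\ \bar\alpha_\epsilon L & I\end{bmatrix}$ restricted to the appropriate subspace. Handling this restriction (ensuring $\bar P$ is positive definite on the subspace and that the quadratic form estimates survive the projection) is the delicate bookkeeping step.
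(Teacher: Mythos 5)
Your overall strategy --- weak contractivity for part~\ref{desc-1-1}, partial contraction with respect to the subspace of equilibria via Theorem~\ref{th-contr-sub}, and a reuse of the quantitative machinery of Theorem~\ref{th-1} with the Laplacian restricted to its range playing the role of $A$ --- is indeed the paper's strategy, and the ``fallback'' in your final paragraph is essentially what the paper does: it sets $V=\diag(I_{nN},R\otimes I_n)$ with $RLR^\top=\Lambda=\diag(\lambda_2,\dots,\lambda_N)$ and $RR^\top=I_{N-1}$, so that $VD\fPDd V^\top$ has exactly the Jacobian structure of Theorem~\ref{th-1} with full-row-rank constraint matrix $\bar A^*=(\Lambda R)\otimes I_n$, $\sigma_{\min}(\bar A^*)=\lambda_2$, $\sigma_{\max}(\bar A^*)=\lambda_N$, and Hessian sandwiched between $\min_i\ell_{\inf,i}I$ and $\|\ell_{\sup}\|_\infty I$; substituting these into~\eqref{contr_1} gives~\eqref{contr_dec_rate} in one shot. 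Two bookkeeping corrections: the flow-invariant subspace of equilibria (after translating one equilibrium to the origin) is $\mathcal{M}=\{\textbf{x}'=\vect{0},\ \nu'=\vect{1}_N\otimes\alpha\}$, so $V$ leaves the primal block untouched and only quotients the consensus direction out of the \emph{dual} block; the set $\{(\vect{1}_N^\top\otimes I_n)\nu=\mathrm{const}\}$ you name is a different, affine, positively invariant set, used only at the very end to pin down which $\nu^*$ the duals converge to. Also, the effective constraint matrix on the $\lambda_j$-eigenspace is $\lambda_j I_n$, not $\sqrt{\lambda_j}\,I_n$.

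The genuine problem is your third paragraph. As you yourself note afterwards, the system does not decouple into independent $\lambda_j$-blocks because $\nabla^2 F$ is block-diagonal in the agent index, not in the eigenbasis of $L$, so ``apply~\eqref{contr_1} per block and minimize over $j$'' is not available; worse, it cannot produce the stated formula. A single-block application with $\sigma_{\max}=\sigma_{\min}=\lambda_j$ yields a rate proportional to $\lambda_j^{3}/(\lambda_j+1)$ over $\lambda_j^2+\tfrac34\lambda_j^3+\|\ell_{\sup}\|_\infty^2$, whereas~\eqref{contr_dec_rate} mixes $\lambda_2$ and $\lambda_N$ inside a single expression --- the signature of one global completing-the-squares computation, not of a blockwise minimum. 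So the per-block argument must be discarded entirely and your last paragraph promoted to the actual proof. Finally, be explicit that partial contraction only yields convergence of $Vx(t)$ to zero, i.e., to the subspace $\mathcal{M}$; passing from this to convergence to a single point with the same exponential rate requires part~\ref{thss_2} of Theorem~\ref{th-contr-sub}, which uses both the weak contractivity of part~\ref{desc-1-1} and the fact that $\mathcal{M}$ consists of equilibria --- in the paper this is the load-bearing step, not an afterthought.
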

\begin{proof}
Set $f(\textbf{x})=\sum_{i=1}^Nf_i(x^i)$ and $\nu=(\nu^1,\dots,\nu^N)$. Succinctly, the dynamics of the system are
\begin{equation}
\label{suc-dec}
\begin{split}
\dot{\textbf{x}}&=-\nabla f(\textbf{x})-(L\otimes{I_n})\nu\\
\dot{\nu}&=(L\otimes{I_n})\textbf{x}
\end{split}.
\end{equation}
Now, let $\bar{A}:=(L\otimes I_n)$ and $(\dot{\textbf{x}},\dot{\nu}):=\fPDd(\textbf{x},\nu)$, 
and so $D\fPDd(\textbf{x},\nu)=\begin{bmatrix}
-\nabla^2f(\textbf{x}) & -\bar{A}^\top\\
\bar{A} & \vect{0}_{m\times m}
\end{bmatrix}$. 
Since $-\nabla^2f(\textbf{x})\preceq 0_{nN\times{nN}}$ because of convex $f_i$, it follows that $\mu_{2}(D\fPDd(\textbf{x},\nu))=0$ for any $\textbf{x}\in\R^{nN},\nu\in\R^{nN}$, and the system is weakly contractive, which proves~\ref{desc-1-1}.

Consider the equilibrium equations of~\eqref{suc-dec}, and let $(\textbf{x}^*,\nu^*)$ be a (candidate) fixed point of the system. From the second equation in~\eqref{suc-dec}, $\textbf{x}^*=\vect{1}_N\otimes v$ with $v\in\R^n$. Now, from the first equation in~\eqref{suc-dec}, we get 
$\vect{0}_{nN}=\nabla f(\textbf{x}^*)+(L\otimes{I_n})\nu^*$ and left multiplying by $\vect{1}_{N}^\top\otimes I_n$, we obtain $\vect{0}_{n}=\sum_{i=1}^N\nabla f_i(v)$. 
This is exactly the necessary and sufficient conditions of optimality for the problem~$\min_{x\in \R^{n}}f(x)=\sum_{i=1}^Nf_i(x)$, and so $v=x^*$ is an optimal solution to this problem. Moreover, $\nu^*$ is just some Lagrange multiplier for the constraint in~\eqref{eq:opt_dec_cent}.

Now, define the change of coordinates $(\textbf{x}',\nu')=(\textbf{x}-\textbf{x}^*,\nu-\nu^*)$, then we get 
$(\dot{\textbf{x}}',\dot{\nu}')=(\dot{\textbf{x}},\dot{\nu})=\fPDd(\textbf{x}'+\textbf{x}^*,\nu'+\nu^*)$, 
and whenever we refer to the word ``system" for the rest of the proof, we refer to the dynamics after this coordinate change. Observe the system has an equilibrium point $(\vect{0}_{nN},\vect{0}_{nN})$, but it is not unique; in fact, it is easy to verify that the following is a linear subspace of equilibria for the system: $\mathcal{M}=\setdef{(\textbf{x}',\nu')\in\R^{nN}\times\R^{nN}}{\textbf{x}'=\vect{0}_{nN},\nu'=\vect{1}_{N}\otimes\alpha \text{ with }\alpha\in\R^n}$. As a corollary, the subspace $\mathcal{M}$ is flow-invariant for the distributed system.

Now, since $L$ has $N-1$ strictly positive eigenvalues, we can write them as 
$0=\lambda_1<\lambda_2\leq\dots\leq \lambda_N$, and, by eigendecomposition, we can obtain an orthogonal matrix $R'\in\R^{N\times{N}}$ such that $R'LR'^\top=\diag(0,\lambda_2,\dots,\lambda_N)$. From here, we obtain the matrix $R\in\R^{N-1\times{N}}$ as a submatrix of $R'$ such that $RLR^\top=\Lambda$ with $\Lambda=\diag(\lambda_2,\dots,\lambda_N)$ with the properties: $RL=\Lambda R$, $RR^\top = I_{N-1}$ and $R^\top R\neq I_N$. Now, define $V=\diag(I_{nN},(R\otimes{I_n}))$ 
%
which has orthonormal rows and expresses $\mathcal{M}=\setdef{(\textbf{x}',\nu')\in\R^{nN}\times\R^{nN}}{V(\textbf{x}',\nu')^\top=\vect{0}_{(2N-1)n}}$. Then, 
we can use Theorem~\ref{th-contr-sub} for stating the convergence of trajectories of the system to $\mathcal{M}$ using partial contraction. 
First, note that 
$VD\fPDd(\textbf{x}',\nu') V^\top=
\begin{bmatrix}
\nabla^2f(\textbf{x}'+\textbf{x}^*)&-((R^\top\Lambda)\otimes{I_n})\\
((\Lambda R)\otimes{I_n})&\vect{0}_{n(N-1)\times{n(N-1)}}
\end{bmatrix}$,
where we have used the fact that $(R\otimes{I_n})(L\otimes
I_n)
=(\Lambda R)\otimes{I_n}$. Now, set
$\bar{A}^*:=(\Lambda R)\otimes{I_n}$ and note that
$\sigma_{\min}(\bar{A}^*) = \lambda_2$ and $\sigma_{\max}(\bar{A}^*) = \lambda_N$. 

Since $\nabla^2f(\textbf{x}'+\textbf{x}^*)\preceq \vect{0}_{nN\times{nN}}$, it follows that $\mu_{2}(VD\fPDd(\textbf{x}'+\textbf{x}^*,\nu'+\nu^*)V^\top)\leq 0$. Now, since $\nabla^2f(\textbf{x}^*)\prec \vect{0}_{nN\times{nN}}$, Proposition~\ref{prop-neg-eig} implies that $VD\fPDd(\textbf{x}^*,\nu^*)V^\top$ is a Hurwitz matrix, which implies that $\mu_{2}(VD\fPDd(\textbf{x}'+\textbf{x}^*,\nu'+\nu^*)V^\top)< 0$ for any $(\textbf{x}',\nu')\in\mathcal{M}$, and thus result~\ref{thss_1a} of Theorem~\ref{th-contr-sub} implies the convergence to the subspace $\mathcal{M}$ (and this implies convergence of the trajectories of the original system to the set $\mathcal{M}'=\setdef{(\textbf{x},\nu)\in\R^{nN}\times\R^{nN}}{\textbf{x}=\textbf{x}^*, \nu=\vect{1}_{N}\otimes\alpha \text{ with }\alpha\in\R^n}$, i.e., $\mathcal{M}$ is simply the set $\mathcal{M}'$ translated or anchored to the origin). Since $\mathcal{M}$ is a set of equilibria for the system and the system is weakly contractive, result~\ref{thss_2} of Theorem~\ref{th-contr-sub} concludes that any trajectory of the system converges to some equilibrium point in $\mathcal{M}$. 

Now, observe that $(\vect{1}_N^\top\otimes I_n)\dot{\nu}=(\vect{1}^\top_N L\otimes I_n)\textbf{x}=\vect{0}_{n}$, and so the set $\setdef{(\textbf{x},\nu)\in\R^{nN}\times\R^{nN}}{(\vect{1}_N^\top\otimes I_n)\nu=(\vect{1}_N^\top\otimes I_n)\nu(0)}$ is positively-invariant for~\eqref{eq:pd_dec}. Then, it follows that $\sum^N_{k=1}\nu^k_i(t)=\sum^N_{k=1}\nu^k_i(0)$ for any $i\in\until{n}$ and any $t\geq 0$. Then, since $\lim_{t\to\infty}\nu^i(t)<\infty$, 
we conclude the proof for statement~\ref{desc-1-2}.

We prove statement~\ref{desc-2}. Observe that 
\\$\min_{i\in\until{N}}\ell_{\inf,i}\;I_{nN\times{nN}}\preceq\nabla^2f(\textbf{x}'+\textbf{x}^*)\preceq \max_{i\in\until{N}}\ell_{\sup,i}\;I_{nN\times{nN}}$ and that $\bar{A}^*$ is full-row rank since it is easy to verify that $\rank(\Lambda R)=N-1$ and so $\rank((\Lambda R)\otimes{I_n})=n(N-1)$. Then, for $0<\epsilon<1$, defining $P= \begin{bmatrix} I_{nN}&\alpha_\epsilon\;
            \bar{A}^{*\top}\\ \alpha_\epsilon\; \bar{A}^* &I_{n(N-1)}
         \end{bmatrix} \in\real^{nN\times nN}$ and 
$\alpha_{\epsilon} :=  \frac{\epsilon\min_{i\in\until{N}}\ell_{\inf,i}}{\sigma^2_{\max}(\bar{A}^*)+\frac{3}{4}\sigma_{\max}(\bar{A}^*)\sigma^2_{\min}(\bar{A}^*)+\norm{\ell_{\sup}}_{\infty}^2}>0$, 
we can use Theorem~\ref{th-1} to conclude that 
$\mu_{2,P^{1/2}}(VD\fPDd(\textbf{x}'+\textbf{x}^*,\nu'+\nu^*)V^\top)\leq -c$ with $c$ as in equation~\eqref{contr_dec_rate} 
for any positive $\epsilon<1$.
%
%
So then, any trajectory $(\textbf{x}'(t),\nu'(t))$ exponentially
converges to the subspace $\mathcal{M}$ with rate $c$, due to
statement~\ref{thss_2} from Theorem~\ref{th-contr-sub}. 
Finally, the proof finishes by following a similar proof to statement~\ref{desc-1-2}.%
%
\end{proof}
For the case of convex $f_i$, Theorem~\ref{th-dec} does not state convergence - nor partial contraction - without additional assumptions. Similar to the analysis in Section~\ref{sec:basic-prob}, we present an example where augmenting the Lagrangian let us use Theorem~\ref{th-dec}.
%
%
We consider the popular 
\emph{distributed least-squares problem}~\cite{PW-SM-JL-WR:19}. Given a full-column rank matrix $H\in\R^{N\times{n}}$, $n<N$, it is known that $x^*=(H^\top H)^{-1}H^\top z$ is the unique solution to the least-squares 
problem $\min_{x\in\R^n}\norm{z-Hx}_2^2$, for $z\in\R^N$. 
An equivalent distributed version 
is
\begin{equation}\label{eq:ls-dec}
\begin{split}
	\min_{\textbf{x}\in \R^{nN}} \quad &\sum^N_{i=1}(h_i^\top x^i-z_i)^2\\
	&(L\otimes I_n)\textbf{x}= \vect{0}_{nN}\\
\end{split}
\end{equation}
with $h_i^\top\in\R^{1\times{n}}$ being the $i$th row of the matrix $H$, $\textbf{x}=(x^1,\dots,x^N)^\top$ and $z=(z_1,\dots,z_N)^\top$. Notice that $f(\textbf{x})=\sum^N_{i=1}|h_i^\top x^i-z_i|^2$ is convex, since 
$\nabla^2f(\textbf{x})=\diag(h_1h_1^\top,\dots, h_Nh_N^\top)\succeq\vect{0}_{nN\times{nN}}$. 
We propose to 
augment the Lagrangian 
with the quadratic term $\frac{\rho}{2}\textbf{x}^\top(L\otimes I_n)\textbf{x}$ with $\rho>0$ (which does not alter the original saddle points) 
and obtain
\begin{equation}\label{eq:pd_ls_1}
\begin{split}
\dot{x}^i&
	=-(h_i^\top x^i-z_i)h_i-\rho\sum_{j\in\mathcal{N}_i}(x^j-x^i)\\
	&-\sum_{j\in\mathcal{N}_i}(\nu^j-\nu^i)\\
\dot{\nu}^i&=\sum_{j\in\mathcal{N}_i}(x^j-x^i)
\end{split}
\end{equation}
for $i\in\until{N}$. The new algorithm is 
distributed.

Observe that $\ker(\diag(h_1h_1^\top,\dots, h_Nh_N^\top))\cap\ker(L\otimes I_n)=\{\vect{0}_{nN}\}$ implies \\$\ell_{\inf}^*\,I_{nN}\preceq \diag(h_1h_1^\top,\dots, h_Nh_N^\top)+(L\otimes I_n)$ for some constant $\ell_{\inf}^*>0$. Then, the following follows from Theorem~\ref{th-dec}. 

\begin{corollary}[Contraction analysis of distributed least-squares]
Consider the system~\eqref{eq:pd_ls_1}, and let $x^*$ be the unique solution to the least-squares problem. Then, 
for any $(x^i(0),\nu^i(0))\in\R^n\times{\R^n}$, $\lim_{t\to\infty}x^i(t)= x^*$ and $\lim_{t\to\infty}\nu^i(t)=\nu^*_i$ for some $\nu_i^*$ such that $\sum^N_{k=1}\nu_k^*=\sum^N_{k=1}\nu^k(0)$; and, for $0<\epsilon<1$, the convergence of $(\textbf{x}(t),\nu(t))$ has exponential rate
$$\epsilon\frac{3}{4}\frac{\lambda_N \lambda^2_2}{\lambda_N+1}\frac{\ell_{\inf}^*}{\lambda^2_N+\frac{3}{4}\lambda_N\lambda^2_2+\left(\lambda_N+\rho\max_i\norm{h_i}^2_2\right)^2}$$
where $\lambda_2$ and $\lambda_N$ are the smallest non-zero and
the largest eigenvalues of $L$, respectively.
\end{corollary}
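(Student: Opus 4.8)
The plan is to view \eqref{eq:pd_ls_1} as the augmented distributed primal--dual dynamics and to reduce the statement to Theorem~\ref{th-dec}, in exact analogy with how Corollary~\ref{th-3} is obtained from Theorem~\ref{th-1}. First I would write \eqref{eq:pd_ls_1} in the compact form $\dot{\textbf{x}}=-\nabla\tilde f(\textbf{x})-(L\otimes I_n)\nu$, $\dot{\nu}=(L\otimes I_n)\textbf{x}$, where $\tilde f(\textbf{x})=\sum_{i=1}^N(h_i^\top x^i-z_i)^2+\tfrac{\rho}{2}\textbf{x}^\top(L\otimes I_n)\textbf{x}$ is the least-squares objective augmented by the consensus penalty, so that $\nabla^2\tilde f(\textbf{x})=\diag(h_1h_1^\top,\dots,h_Nh_N^\top)+\rho(L\otimes I_n)$ is constant in $\textbf{x}$ and positive semidefinite. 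Since $(L\otimes I_n)(\vect{1}_N\otimes v)=\vect{0}_{nN}$, the penalty term vanishes on the consensus subspace and hence does not perturb the saddle points of the original Lagrangian; consequently, repeating the equilibrium computation in the proof of Theorem~\ref{th-dec} --- any equilibrium has $\textbf{x}^*=\vect{1}_N\otimes v$, and left-multiplying $\vect{0}_{nN}=\nabla\tilde f(\textbf{x}^*)+(L\otimes I_n)\nu^*$ by $\vect{1}_N^\top\otimes I_n$ --- yields $H^\top H v=H^\top z$, i.e.\ $v=x^*=(H^\top H)^{-1}H^\top z$, the unique least-squares solution.

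The next step records the two spectral properties of $\tilde f$ that feed into Theorem~\ref{th-dec}. For weak contraction and convergence to the consensus subspace it is enough that $\nabla^2\tilde f(\textbf{x})\succeq\vect{0}_{nN\times nN}$ everywhere and that $\nabla^2\tilde f(\textbf{x})\succ\vect{0}_{nN\times nN}$; the latter is equivalent to $\ker(\diag(h_1h_1^\top,\dots,h_Nh_N^\top))\cap\ker(L\otimes I_n)=\{\vect{0}_{nN}\}$, which holds because $H$ is full-column rank and $\mathcal{G}$ is connected (a consensus vector $\vect{1}_N\otimes v$ killed by every $h_ih_i^\top$ forces $Hv=\vect{0}_N$, hence $v=\vect{0}_n$); this is the observation stated just before the corollary and supplies a constant $\ell_{\inf}^*>0$ with $\ell_{\inf}^*I_{nN}\preceq\nabla^2\tilde f(\textbf{x})$. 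For the rate I would also use the upper bound $\nabla^2\tilde f(\textbf{x})\preceq\ell_{\sup}^*I_{nN}$, with $\ell_{\sup}^*$ the Lipschitz-smoothness constant of $\tilde f$, namely the quantity $\lambda_N+\rho\max_i\norm{h_i}_2^2$ appearing in the denominator of the claimed rate. Note that the un-augmented $f$ satisfies neither Assumption~\ref{ass:5} nor~\ref{ass:6}, since $\diag(h_1h_1^\top,\dots,h_Nh_N^\top)$ is rank deficient --- this is precisely why the augmentation is needed.

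With these facts I would invoke Theorem~\ref{th-dec} with $f$ replaced by $\tilde f$ and $\min_i\ell_{\inf,i}$, $\norm{\ell_{\sup}}_\infty$ replaced by $\ell_{\inf}^*$, $\ell_{\sup}^*$: weak contractivity with respect to $\norm{\cdot}_2$ together with partial contraction toward the consensus subspace gives $\lim_{t\to\infty}x^i(t)=x^*$ and $\lim_{t\to\infty}\nu^i(t)=\nu^*_i$, while positive invariance of $\setdef{(\textbf{x},\nu)}{(\vect{1}_N^\top\otimes I_n)\nu=(\vect{1}_N^\top\otimes I_n)\nu(0)}$ --- a consequence of $(\vect{1}_N^\top L\otimes I_n)\textbf{x}=\vect{0}_n$ --- forces $\sum_{k=1}^N\nu^k_i(t)=\sum_{k=1}^N\nu^k(0)$; substituting the two new constants into \eqref{contr_dec_rate} then produces exactly the stated exponential rate. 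The point requiring care --- and the main obstacle --- is that $\tilde f$ is \emph{not} separable across agents (the Laplacian couples them), so Theorem~\ref{th-dec} does not apply verbatim; I would therefore verify that its proof uses only the positive semidefiniteness and the eigenvalue bounds of $\nabla^2\tilde f$ together with the Laplacian structure of the constraint matrix $L\otimes I_n$ (the reduction to $\bar{A}^*=(\Lambda R)\otimes I_n$ with $\sigma_{\min}(\bar{A}^*)=\lambda_2$, $\sigma_{\max}(\bar{A}^*)=\lambda_N$), all of which remain valid for the augmented objective. A minor bookkeeping issue is that $\ell_{\inf}^*$ is available only as the positive constant furnished by the kernel-intersection condition above, not in closed form.
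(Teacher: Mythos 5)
Your proposal is correct and follows the same route the paper intends for this corollary: rewrite \eqref{eq:pd_ls_1} as the distributed primal--dual dynamics for the augmented objective $\tilde f(\textbf{x})=\sum_i(h_i^\top x^i-z_i)^2+\tfrac{\rho}{2}\textbf{x}^\top(L\otimes I_n)\textbf{x}$, use the kernel-intersection condition to obtain $\ell_{\inf}^*I_{nN}\preceq\nabla^2\tilde f$, and substitute the new strong-convexity and Lipschitz constants into the rate \eqref{contr_dec_rate} of Theorem~\ref{th-dec}. You are in fact more careful than the paper on the one delicate point --- $\tilde f$ is not separable across agents, so Assumption~\ref{ass:6} fails verbatim and one must check that the proof of Theorem~\ref{th-dec} uses only the aggregate Hessian bounds --- and that check is exactly the right thing to carry out.
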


\section{Time-varying optimization}
\label{sec:time-var-prob}

\subsection{Time-varying standard optimization}
\label{tv-p}
%
%
Our results in Section~\ref{sec:basic-prob} 
can be used to prove analyze 
%
the case where the associated optimization problem is time-varying. 
Consider  
\begin{equation}\label{eq:opt-onl}
	\min_{x\in \real^n} \; f(x,t)\quad\text{ subject to }\quad Ax= b(t)
%
\end{equation}
with the following standing assumptions: 
$A\in\R^{k\times n}$, $k<n$, $b\in\real^k$, $A$ is full-row rank,
and, for every $(x,t)\in\R^n\times{\R_{\geq 0}}$,
\begin{enumerate}
\item\label{as-1} $x\mapsto f(x,t)$ is twice continuously differentiable, uniformly strongly convex 
with constant $\ell_{\inf}>0$, i.e., $\nabla^2f(x,t)\succeq \ell_{\inf} I_n$; and uniformly Lipschitz smooth 
with constant $\ell_{\sup}>0$, i.e., $\nabla^2f(x,t)\preceq \ell_{\sup} I_n$;
\item \label{as-1a} $t\mapsto \nabla f(x,t)$ and $t\mapsto b(t)$ are continuously differentiable functions. 
\end{enumerate}
The associated \emph{time-varying primal-dual dynamics} are
\begin{equation}\label{eq:pd-online}
\begin{bmatrix}
\dot{x}\\\dot{\nu}
\end{bmatrix}=
\begin{bmatrix}
-\nabla f(x,t)-A^\top\nu\\
Ax-b(t)
\end{bmatrix}.
\end{equation}

Given a fixed time $t$, let $x^*(t)$ be a solution to the program $\min_{x:Ax=b(t)}f(x,t)$ and $\nu^*(t)$ its associated Lagrange multiplier. From the standing assumptions and Theorem~\ref{th-1}, for any fixed $t$, there exists a unique optimizer $(x^*(t),\nu^*(t))$. Then, $(x^*(t),\nu^*(t))_{t\geq 0}$ defines 
the \emph{optimizer trajectory} 
of the optimization problem~\eqref{eq:opt-onl}. The following result establishes the performance of the primal-dual dynamics in tracking the optimizer trajectory. 

\begin{theorem}[Contraction analysis of time-varying primal-dual dynamics]
\label{th:onl}
Consider the time-varying optimization problem~\eqref{eq:opt-onl}, its standing assumptions, and its associated primal-dual dynamics~\eqref{eq:pd-online}. 
\begin{enumerate}
\item\label{tv-1} The primal-dual dynamics are contractive with
  respect to $\norm{\cdot}_{2,P^{1/2}}$ with contraction rate $c$,
  where $P$ is the matrix defined in~\eqref{def:P} and $c$ is the same
  contraction rate as in~\eqref{contr_1} of
  Theorem~\ref{th-1}. 
\setcounter{saveenum}{\value{enumi}}
\end{enumerate}
Assume that, for any $t\geq 0$, $\norm{\dot{b}(t)}_2\leq \beta_1$ and
$\norm{\frac{\partial}{\partial t}\nabla f(x,t)}_2\leq \beta_2$ for some positive constants
$\beta_1,\beta_2$, and let $z(t):=(x(t),\nu(t))^{\top}$ and
$z^*(t):=(x^*(t),\nu^*(t))^{\top}$.
\begin{enumerate}\setcounter{enumi}{\value{saveenum}}
\item\label{tv-2}Then,
\begin{equation}
\begin{split}
\label{tracking-pd}
&\norm{z(t)-z^*(t)}_{2,P^{1/2}}\\
&\leq \left(\norm{z(0)-z^*(0)}_{2,P^{1/2}}-\frac{\rho}{c}\right)e^{-ct}+\frac{\rho}{c},
\end{split}
\end{equation}
i.e., the tracking error is uniformly ultimately bounded by $\frac{\rho}{c}$ with 
\begin{align*}
\rho=
\lambda_{\max}(P)&\Bigg(\frac{\beta_2}{\ell_{\inf}}+\left(\frac{\sigma_{\max}(A)}{\ell_{\inf}}+1\right)\\
&\frac{\ell_{\max}}{\sigma_{\min}^2(A)}\left(\beta_1+\frac{\sigma_{\max}(A)}{\ell_{\inf}}\beta_2\right)
\Bigg). 
\end{align*}
\end{enumerate}
\end{theorem}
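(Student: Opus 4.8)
The plan is to treat the two statements separately. Statement~\ref{tv-1} is a verbatim re-run of the proof of Theorem~\ref{th-1}: the Jacobian of~\eqref{eq:pd-online} at $(x,\nu)$ and time $t$ is $\begin{bmatrix} -\nabla^2 f(x,t) & -A^\top \\ A & \vect{0}_{k\times k}\end{bmatrix}$, which has the same block structure as $D\fPD$ there, and standing assumption~\ref{as-1} gives $\ell_{\inf}I_n\preceq\nabla^2 f(x,t)\preceq\ell_{\sup}I_n$ uniformly in $(x,t)$. Hence the ``completing the squares'' estimate of that proof applies uniformly in $t$ and yields $\mu_{2,P^{1/2}}(D\fPDo(x,\nu,t))\le -c$ for all $(x,\nu,t)$, with $P$ as in~\eqref{def:P} and $c$ as in~\eqref{contr_1}; equivalently, $\fPDo$ satisfies the integral contractivity condition for this $P$ and $c$, which is statement~\ref{tv-1}. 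For statement~\ref{tv-2} I would first record that $t\mapsto z^*(t)=(x^*(t),\nu^*(t))^\top$ is $C^1$: apply the implicit function theorem to the KKT system $\nabla f(x^*(t),t)+A^\top\nu^*(t)=\vect{0}_n$, $Ax^*(t)=b(t)$, whose partial Jacobian in $(x,\nu)$ is $\begin{bmatrix}\nabla^2 f(x^*(t),t) & A^\top \\ A & \vect{0}_{k\times k}\end{bmatrix}=\diag(-I_n,I_k)\begin{bmatrix}-\nabla^2 f(x^*(t),t) & -A^\top \\ A & \vect{0}_{k\times k}\end{bmatrix}$ and hence nonsingular (the second factor being Hurwitz by Proposition~\ref{prop-neg-eig}), while the $t$-dependence is $C^1$ by standing assumption~\ref{as-1a}. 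In particular $\fPDo(z^*(t),t)=\vect{0}_{n+k}$, i.e. $z^*(t)$ is the instantaneous equilibrium of~\eqref{eq:pd-online}.

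Next I would derive the tracking differential inequality. With $e(t):=z(t)-z^*(t)$ one has $\dot e = \fPDo(z(t),t)-\fPDo(z^*(t),t)-\dot z^*(t)$. Differentiating $\tfrac12\norm{e}_{2,P^{1/2}}^2=\tfrac12 e^\top Pe$, using the integral contractivity condition from statement~\ref{tv-1} to bound $e^\top P(\fPDo(z,t)-\fPDo(z^*,t))\le -c\norm{e}_{2,P^{1/2}}^2$, and Cauchy--Schwarz in the $P$-weighted inner product to bound $-e^\top P\dot z^*\le\norm{e}_{2,P^{1/2}}\norm{\dot z^*}_{2,P^{1/2}}$, one obtains $D^+\norm{e(t)}_{2,P^{1/2}}\le -c\norm{e(t)}_{2,P^{1/2}}+\norm{\dot z^*(t)}_{2,P^{1/2}}$. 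The comparison lemma (Grönwall) then gives $\norm{e(t)}_{2,P^{1/2}}\le e^{-ct}\norm{e(0)}_{2,P^{1/2}}+\int_0^t e^{-c(t-s)}\norm{\dot z^*(s)}_{2,P^{1/2}}\,ds$, and once a uniform bound $\norm{\dot z^*(s)}_{2,P^{1/2}}\le\rho$ is in hand, carrying out the integral yields exactly~\eqref{tracking-pd} and the uniform ultimate bound $\rho/c$.

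It then remains to bound $\norm{\dot z^*}$. Differentiating the two KKT equations in $t$ gives $\nabla^2 f(x^*,t)\dot x^*+\dot{\nabla}f(x^*,t)+A^\top\dot\nu^*=\vect{0}_n$ and $A\dot x^*=\dot b(t)$; eliminating $\dot x^*$ and back-substituting gives $\dot\nu^*=-\big(A(\nabla^2 f(x^*,t))^{-1}A^\top\big)^{-1}\big(\dot b(t)+A(\nabla^2 f(x^*,t))^{-1}\dot{\nabla}f(x^*,t)\big)$ and $\dot x^*=-(\nabla^2 f(x^*,t))^{-1}\big(\dot{\nabla}f(x^*,t)+A^\top\dot\nu^*\big)$. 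From $\ell_{\inf}I_n\preceq\nabla^2 f\preceq\ell_{\sup}I_n$ we get $\norm{(\nabla^2 f)^{-1}}_2\le 1/\ell_{\inf}$ and $A(\nabla^2 f)^{-1}A^\top\succeq(\sigma_{\min}^2(A)/\ell_{\sup})I_k$, hence $\norm{(A(\nabla^2 f)^{-1}A^\top)^{-1}}_2\le\ell_{\sup}/\sigma_{\min}^2(A)$; with $\norm{\dot b}_2\le\beta_1$, $\norm{\dot{\nabla}f}_2\le\beta_2$ and $\norm{A}_2=\sigma_{\max}(A)$ this yields $\norm{\dot\nu^*}_2\le\frac{\ell_{\sup}}{\sigma_{\min}^2(A)}\big(\beta_1+\frac{\sigma_{\max}(A)}{\ell_{\inf}}\beta_2\big)$ and then $\norm{\dot x^*}_2\le\frac{1}{\ell_{\inf}}\big(\beta_2+\sigma_{\max}(A)\norm{\dot\nu^*}_2\big)$. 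Using $\norm{\dot z^*}_2\le\norm{\dot x^*}_2+\norm{\dot\nu^*}_2$, collecting terms, and converting via $\norm{v}_{2,P^{1/2}}\le\lambda_{\max}(P)^{1/2}\norm{v}_2\le\lambda_{\max}(P)\norm{v}_2$ (the last step since $P$ has unit diagonal, so $\lambda_{\max}(P)\ge 1$) produces exactly the $\rho$ in the statement (with $\ell_{\sup}$, as in assumption~\ref{as-1}, playing the role of the symbol $\ell_{\max}$ appearing there).

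I expect the main obstacle to be the robustness-of-contraction step in the second paragraph — cleanly isolating the exponential decay at rate $c$ from the forcing term $\norm{\dot z^*(t)}$ for the moving reference $z^*(t)$ — together with the preliminary fact that $z^*(\cdot)$ is differentiable; the KKT differentiation and the ensuing matrix-norm estimates that deliver $\rho$ are routine linear algebra.
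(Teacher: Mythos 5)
Your proposal is correct and follows essentially the same route as the paper: part (i) by re-running the Theorem~\ref{th-1} estimate uniformly in $t$, part (ii) by establishing differentiability of $t\mapsto z^*(t)$ via the implicit function theorem (with nonsingularity from Proposition~\ref{prop-neg-eig}), differentiating the KKT system to bound $\|\dot{x}^*\|_2$ and $\|\dot{\nu}^*\|_2$, and combining contraction with the forcing term $\|\dot{z}^*\|_{2,P^{1/2}}$ through the comparison lemma. The only cosmetic difference is that you derive the differential inequality $D^+\Delta\le -c\Delta+\|\dot z^*\|_{2,P^{1/2}}$ from scratch via $\frac{d}{dt}\tfrac12 e^\top Pe$, whereas the paper cites it as a known lemma; the resulting bound and constant $\rho$ (including the $\lambda_{\max}(P)\ge 1$ slack) coincide.
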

\begin{proof}
Let $(\dot{x},\dot{\nu}):=\fPDo(x,\nu,t)$, and so 
$D\fPDo(x,\nu,t)=\begin{bmatrix}
-\nabla^2f(x,t) & -A^\top\\
A & \vect{0}_{k\times k}
\end{bmatrix}$. Since $A$ is constant and considering item~\ref{as-1} of the standing assumptions,   
we can finish the proof for statement~\ref{tv-1} by following the same proof as in Theorem~\ref{th-1}.
Now we prove statement~\ref{tv-2}. Let us fix any $t\geq 0$. Then, the KKT conditions that the optimizers $x^*(t)$ and $\nu^*(t)$ must satisfy (i.e., equivalent to the equilibrium equations of the system~\eqref{eq:pd-online}) are
\begin{align}
 \label{aca2}\vect{0}_n&=-\nabla f(x^*(t),t) - A^\top\nu^*(t)\\
\label{aca1}\vect{0}_k&=Ax^*(t)-b(t),
\end{align}
We first show that the curves $t\mapsto x^*(t)$ and $t\mapsto
\nu^*(t)$ are continuously differentiable. Define the function
$g:\real^{k+n+1}\to \real^{k+n}$ as 
$g(t,x,\nu) = \begin{bmatrix}
    -\nabla f(x,t) - A^{\top}\nu\\
    Ax - b(t)
  \end{bmatrix}$. 
Since $t\mapsto b(t)$ and $t\mapsto \nabla f(x,t)$ are continuously
differentiable, the function $g$ is continuously differentiable on
$\real^{n+k+1}$. Moreover, note that 
$\nabla_{(x,\nu)} g(t,x,\nu) = D\fPDo(x,\nu,t)$. 
By item~\ref{as-1} of the standing assumptions, we know that $-\nabla^2f(x,t)\preceq-\ell_{\inf}
I_n$ and $A$ is full row rank. From Proposition~\ref{prop-neg-eig}, this implies that $\nabla_{(x,\nu)} g(t,x,\nu)$ 
is Hurwitz and therefore, nonsingular. Finally, the Implicit Function
Theorem~\cite[Theorem 2.5.7]{RA-JEM-TSR:88} implies the solutions $t\mapsto x^*(t)$ and
$t\mapsto \nu^*(t)$ of the algebraic equations~\eqref{aca2} and~\eqref{aca1}
are continuously differentiable for any $t\in\R_{\geq 0}$. 

Now, observe that equation~\eqref{aca1} implies 
$\norm{A\dot{x^*}(t)}_2\leq \beta_1$.
Then, from~\eqref{aca2}
\begin{align*}
&\implies \vect{0}_{n}=\nabla^2f(x^*(t),t)\dot{x^*}(t)+\dot{\nabla}f(x^*(t),t)\\
&\qquad +A^\top\dot{\nu^*}(t)\\
&\implies \vect{0}_{m}=A\dot{x^*}(t)+A(\nabla^2f(x^*(t),t))^{-1}\dot{\nabla}f(x^*(t),t)\\
&\qquad+A(\nabla^2f(x^*(t),t))^{-1}A^\top\dot{\nu^*}(t)\\
&\implies \norm{\dot{\nu^*}(t)}_2\leq \frac{\ell_{\max}}{\sigma_{\min}^2(A)}\left(\beta_1+\frac{\sigma_{\max}(A)}{\ell_{\inf}}\beta_2\right),
\end{align*}
where the first implication follows from differentiation, the second one follows from the Hessian being  invertible, and the third one is derived considering that $A$ is full-row rank. Similarly, we  differentiate~\eqref{aca2} again and obtain 
\begin{align*}
\norm{\dot{x^*}(t)}_2\leq\frac{\beta_2}{\ell_{\inf}}+\frac{\sigma_{\max}(A)}{\ell_{\inf}}\norm{\dot{\nu^*}(t)}_2.
\end{align*}
\noindent Now, considering the contraction result on item~\ref{tv-1}, we set $\Delta(t):=\left\|\begin{bmatrix}x(t)\\ \nu(t)\end{bmatrix}-\begin{bmatrix}x^*(t)\\ \nu^*(t)\end{bmatrix}\right\|_{2,P^{1/2}}$ and use~\cite[Lemma~2]{HDN-TLV-KT-JJS:18} to obtain the following differential inequality $\dot{\Delta}(t)\leq -c\Delta(t)+\left\|\begin{bmatrix}\dot{x^*}(t)\\ \dot{\nu^*}(t)\end{bmatrix}\right\|_{2,P^{1/2}}$. Then, $\dot{\Delta}(t)\leq -c\Delta(t)+\lambda_{\max}(P)(\norm{\dot{x^*}(t)}_2+\norm{\dot{\nu^*}(t)}_2)$ 
and using our previous results, we immediately obtain $\dot{\Delta}(t)\leq -c\Delta(t)+\rho$ with 
$\rho$ as in the theorem statement.
%
%
%
%
%
Now, observe the function $h(u)=-cu+\rho$ is Lipschitz (since it is a linear function), and we can use the Comparison Lemma~\cite{HKK:02} to 
upper bound $\Delta(t)$ by the solution to the differential equation $\dot{u}(t)=-cu(t)+\rho$ for all $t\geq 0$, from which~\ref{tv-2} follows. 
%
%
%
%
%
\end{proof}
%
%
%
\begin{remark}
\label{remark_1}
The bounds in the assumptions for statement~\ref{tv-2} in Theorem~\ref{th:onl} ensure that 
the rate at which the time-varying
optimization changes is bounded. Indeed, the right-hand side of equation~\eqref{tracking-pd} is consistent: the larger (lower) these bounds, the larger (lower) the asymptotic tracking error. Moreover, the tracking is better the larger the contraction rate.
\end{remark}

\subsection{Time-varying distributed optimization}
\label{tv-do}

Our partial contraction analysis of Section~\ref{dec-dist} can be extended to 
obtain new results of performance guarantees 
for the following \emph{time-varying distributed optimization problem}
%
%
%
\begin{equation}\label{eq:opt-onl-d}
\begin{split}
	\min_{\textbf{x}\in \R^{nN}} \quad &\sum_{i=1}^Nf_i(x^i,t)\\
	&(L\otimes I_n)\textbf{x}= \vect{0}_{nN},
\end{split}
\end{equation}
where we consider a time-invariant connected undirected graph whose Laplacian matrix is $L$, and set $\textbf{x}=(x^i,\dots,x^N)^\top$ with $x^i\in\R^n$, 
with the following standing assumptions: 
for every $(x,t)\in\R^n\times{\R_{\geq 0}}$, and for any $i\in\until{N}$
\begin{enumerate}
\item\label{as-1-o-d} $x\mapsto f_i(x,t)$ is twice continuously differentiable, uniformly strongly convex with constant $\ell_{\inf,i}>0$, i.e., $\nabla^2f_i(x,t)\succeq \ell_{\inf,i} I_n$; and uniformly Lipschitz smooth with constant $\ell_{\sup,i}>0$, i.e., $\nabla^2f_i(x,t)\preceq \ell_{\sup,i} I_n$;
\item \label{as-1a-o-d} $t\mapsto \nabla f_i(x,t)$ is continuously differentiable.
\end{enumerate}
%
Then, the associated primal-dual dynamics are
\begin{equation}\label{eq:pd_dec_o}
\begin{split}
	\dot{x}^i&=-\nabla_{x^i}f_i(x^i,t)-\sum_{j\in\mathcal{N}_i}(\nu^j-\nu^i)\\
	\dot{\nu}^i&=\sum_{j\in\mathcal{N}_i}(x^j-x^i)\\
\end{split}
\end{equation}
%
%
for $i\in\until{N}$. Given a fixed time $t$, let $\textbf{x}^*(t)=\vect{1}_N\otimes x^*(t)$ with $x^*(t)$ being the unique solution to the program $\min_{x}\sum_{i=1}^Nf_i(x,t)$. Then, $(x^*(t))_t$ is a unique trajectory; however, there may exist multiple trajectories of the dual variables associated to the  
constraint in~\eqref{eq:opt-onl-d}. Let $\nu^*(t)=({\nu^1}^*(t),\dots,{\nu^N}^*(t))^\top$ be any dual variable obtained by solving the problem~\eqref{eq:opt-onl-d} for a fixed $t$. 
Then, we define the \emph{time-varying set of optimizers} as: 
%
\begin{align*}
&\mathcal{M}(t)=\{(\textbf{x},\nu)\in\R^{nN}\times\R^{nN}|\\
&V(\textbf{x}-\vect{1}_N\otimes{x^*(t)},\nu-\nu^*(t))^\top=(\vect{0}_{nN},\vect{0}_{n(N-1)})^\top\}
\end{align*}
where 
$V=\diag(I_{nN},R\otimes{I_n})$ with $R\in\R^{N-1\times N}$ as in the proof of Theorem~\ref{th-dec}. 
%
For convenience, let
$\ell_{\inf}=(\ell_{\inf,1},\dots,\ell_{\inf,N})$ and
$\ell_{\sup}=(\ell_{\sup,1},\dots,\ell_{\sup,N})$; and for $0<\epsilon<1$, we define 
\begin{equation}
 \begin{split}   
    &\tilde{\alpha}_{\epsilon} :=  \frac{\epsilon\min_{i\in\until{N}}\ell_{\inf,i}}{\lambda^2_N+\frac{3}{4}\lambda_N\lambda_2^2+\norm{\ell_{\sup}}_{\infty}^2}>0\\
    &\tilde{P} := \begin{bmatrix} I_{nN}&\alpha_\epsilon\;
            \bar{A}^{*\top}\\ \alpha_\epsilon\; \bar{A}^* &I_{n(N-1)}
         \end{bmatrix} \in\real^{nN\times nN}
         \label{def:P_onldpd}
  \end{split}
\end{equation}
where $\bar{A}^*=(\Lambda R)\otimes{I_n}$, with $\Lambda=\diag(\lambda_2,\dots,\lambda_N)$ containing the nonzero eigenvalues of $L$ in nondecreasing order. 
The following result establishes the performance of the primal-dual dynamics at tracking the time-varying set of optimizers.

\begin{theorem}[Contraction analysis of time-varying distributed primal-dual dynamics]
\label{th:onl-d}
Consider the time-varying optimization problem~\eqref{eq:opt-onl-d}, its standing assumptions, and its associated primal-dual dynamics~\eqref{eq:pd_dec_o}.
Set $z(t):=V(\textbf{x}(t),\nu(t))^\top$ and $z^*(t):=V(\textbf{x}^*(t),\nu^*(t))^\top$.
\begin{enumerate}
\item\label{tv-1-d}
The system associated to $\dot{z}$ is contractive with respect to $\norm{\cdot}_{2,\tilde{P}^{1/2}}$ with 
rate $c:=\alpha_\epsilon\frac{3}{4}\frac{\lambda^2_2}{\lambda_N+1}$. 
\setcounter{saveenum}{\value{enumi}}
\end{enumerate}
Moreover, for any $t\geq 0$, if $\norm{\frac{\partial}{\partial t}\nabla f_i(x,t)}_2\leq \beta_{1,i}$ for some positive constant $\beta_{1,i}$ and any $i\in\until{N}$, then, 
\begin{enumerate}\setcounter{enumi}{\value{saveenum}}
\item\label{tv-2-d}
\begin{equation}
\begin{split}
\label{tracking-pd-d}
&\norm{z(t)-z^*(t)}_{2,\tilde{P}^{1/2}}\\
&\leq\left(\norm{z(0)-z^*(0)}_{2,\tilde{P}^{1/2}}-\frac{\rho}{c}\right)e^{-ct}+\frac{\rho}{c},
\end{split}
\end{equation}
i.e., the tracking error is asymptotically bounded by $\frac{\rho}{c}$ with 
\begin{equation}
\begin{split}
\label{rho-dec-onl}
\rho&=\lambda_{\max}(P)\frac{\|\beta_{1}\|_{1}}{\|\ell_{\inf}\|_1}
N\\
&+\lambda_{\max}(P)\frac{\|\beta_{1}\|_1}{\lambda_2}\left(\frac{\|\ell_{\sup}\|_{\infty}}{\|\ell_{\inf}\|_1}+1\right).
%
\end{split}
\end{equation}
\end{enumerate} 
\end{theorem}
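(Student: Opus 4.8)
The plan is to reduce the whole statement to the already-established time-varying \emph{standard} result, Theorem~\ref{th:onl}, via the same coordinate change used in the proof of Theorem~\ref{th-dec}. Writing $\bar A := L\otimes I_n$, $f(\textbf x,t):=\sum_{i=1}^N f_i(x^i,t)$, $\nu=(\nu^1,\dots,\nu^N)$, and $V=\diag(I_{nN},R\otimes I_n)$ with $R$ as in the proof of Theorem~\ref{th-dec}, I first check that $z=V(\textbf x,\nu)^\top=(\textbf x,\,(R\otimes I_n)\nu)=:(z_1,z_2)$ obeys, using $RL=\Lambda R$, $L=R^\top\Lambda R$ and $\bar A^*=(\Lambda R)\otimes I_n$,
\begin{align*}
\dot z_1 &= -\nabla f(z_1,t) - \bar A^{*\top} z_2, & \dot z_2 &= \bar A^* z_1 ,
\end{align*}
i.e.\ the $z$-dynamics is exactly a system of the form~\eqref{eq:pd-online} with constraint matrix $\bar A^*$ and with \emph{homogeneous and time-invariant} right-hand side $b\equiv\vect{0}$. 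Since $\mathcal G$ is connected, $\bar A^*$ is full row rank with $\sigma_{\min}(\bar A^*)=\lambda_2$ and $\sigma_{\max}(\bar A^*)=\lambda_N$ (as in Theorem~\ref{th-dec}); and since $\nabla^2 f(\textbf x,t)=\diag(\nabla^2 f_1(x^1,t),\dots,\nabla^2 f_N(x^N,t))$, the standing assumptions give $\min_i\ell_{\inf,i}\,I_{nN}\preceq\nabla^2 f(\textbf x,t)\preceq\norm{\ell_{\sup}}_\infty I_{nN}$. Hence the reduced $z$-system satisfies the standing assumptions of Section~\ref{tv-p} with data $(\bar A^*,\ \min_i\ell_{\inf,i},\ \norm{\ell_{\sup}}_\infty)$.

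For statement~\ref{tv-1-d} I would then simply invoke Theorem~\ref{th:onl}\ref{tv-1} (equivalently Theorem~\ref{th-1}\ref{1-2-1}) for the reduced system: it is contractive in $\norm{\cdot}_{2,\tilde P^{1/2}}$, with $\tilde P$ positive definite (the analogue of~\eqref{dd0} being implied by the definition of $\tilde\alpha_\epsilon$ in~\eqref{def:P_onldpd}, exactly as in Theorem~\ref{th-1}), and with contraction rate obtained by substituting $\sigma_{\min}=\lambda_2$, $\sigma_{\max}=\lambda_N$, $\ell_{\inf}=\min_i\ell_{\inf,i}$ into~\eqref{contr_1}, which is the rate $c$ stated in~\ref{tv-1-d}.

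For statement~\ref{tv-2-d} I would run the argument of Theorem~\ref{th:onl}\ref{tv-2} on the reduced system. Although the dual multiplier $\nu^*(t)$ of~\eqref{eq:opt-onl-d} is determined only up to the flow-invariant direction $\vect{1}_N\otimes\alpha$, the curve $z^*(t)=V(\textbf x^*(t),\nu^*(t))^\top$ is well defined because $(R\otimes I_n)(\vect{1}_N\otimes\alpha)=(R\vect{1}_N)\otimes\alpha=\vect{0}$; moreover $z^*(t)$ is the unique equilibrium of the reduced system at time $t$, and since its Jacobian is Hurwitz by Proposition~\ref{prop-neg-eig}, the Implicit Function Theorem~\cite[Theorem 2.5.7]{RA-JEM-TSR:88} gives that $t\mapsto z^*(t)$ is continuously differentiable, just as in Theorem~\ref{th:onl}. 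Then I would bound $\norm{\dot z^*(t)}_2$ by differentiating the reduced KKT equations $\vect{0}=-\nabla f(z_1^*,t)-\bar A^{*\top}z_2^*$ and $\vect{0}=\bar A^* z_1^*$: the $\dot b$-term is absent, $\norm{\dot\nabla f(\textbf x,t)}_2\le(\sum_i\beta_{1,i}^2)^{1/2}\le\norm{\beta_1}_1$, and the estimates on $(\bar A^*(\nabla^2 f)^{-1}\bar A^{*\top})^{-1}$ used in Theorem~\ref{th:onl} aggregate to bounds on $\norm{\dot z_1^*}_2$ and $\norm{\dot z_2^*}_2$ whose sum, multiplied by $\lambda_{\max}(\tilde P)$, is the constant $\rho$ in~\eqref{rho-dec-onl}. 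Finally, with $\Delta(t):=\norm{z(t)-z^*(t)}_{2,\tilde P^{1/2}}$, the contraction result of~\ref{tv-1-d} and~\cite[Lemma~2]{HDN-TLV-KT-JJS:18} give $\dot\Delta\le -c\Delta+\rho$, and the Comparison Lemma~\cite{HKK:02} yields~\eqref{tracking-pd-d}.

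Conceptually the load-bearing step is the reduction in the first paragraph; after it, statement~\ref{tv-1-d} is immediate from Theorem~\ref{th-1}. I expect the main obstacle to be the bookkeeping in statement~\ref{tv-2-d}: propagating the per-agent bounds $\beta_{1,i}$ through the \emph{coupled} differentiated KKT system of the reduced problem — where the equality constraint mixes the blocks — while being careful about which norm ($\ell_1$, $\ell_2$, $\ell_\infty$) is used at each aggregation, so as to land exactly on the stated $\rho$. A secondary subtlety, already handled above, is the well-posedness and differentiability of $z^*(t)$ despite the non-uniqueness of the dual optimizer, which is resolved by noting that $V$ annihilates the ambiguous direction.
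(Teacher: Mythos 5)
Your proposal follows essentially the same route as the paper: the change of variables $z=V(\textbf{x},\nu)^\top$ turns the dynamics into an instance of the standard time-varying primal-dual system~\eqref{eq:pd-online} with constraint matrix $\bar A^*=(\Lambda R)\otimes I_n$ and $b\equiv\vect{0}$ (the paper reaches the same Jacobian via the projection decomposition and the chain rule), after which part~\ref{tv-1-d} is Theorem~\ref{th-1}/Theorem~\ref{th:onl} with $\sigma_{\min}=\lambda_2$, $\sigma_{\max}=\lambda_N$, and part~\ref{tv-2-d} is the same Implicit-Function-Theorem, differentiated-KKT, and Comparison-Lemma argument, with the dual ambiguity removed exactly as you say via $R\vect{1}_N=\vect{0}$. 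The one point to watch is that the stated $\rho$, with $\|\ell_{\inf}\|_1=\sum_i\ell_{\inf,i}$ in the denominator, is obtained by projecting the differentiated KKT system onto the consensus direction $\vect{1}_N^\top\otimes I_n$ (i.e.\ differentiating $\sum_i\nabla f_i(x^*(t),t)=\vect{0}_n$), not by substituting the generic strong-convexity constant $\min_i\ell_{\inf,i}$ into the bounds of Theorem~\ref{th:onl}, which would yield a looser constant than the one claimed.
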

\begin{proof}
%
Define $f(\textbf{x}(t),t):=\sum^N_{i=1}f_i(x^i(t),t)$; 
then 
\begin{equation*} 
\dot{z}
=\begin{bmatrix}-\nabla f(\textbf{x}(t),t)-(L\otimes I_n)\nu(t)\\
(\Lambda R\otimes I_n)\textbf{x}(t)\end{bmatrix}.
\end{equation*}
Then, decomposing $(\textbf{x}(t),\nu(t))^\top=U(\textbf{x}(t),\nu(t))^\top+V^\top z$ where $U=I_{n(2N-1)}-V^\top V$ is a projection matrix, 
we use the chain rule and obtain that the Jacobian for this system  is
$$\begin{bmatrix}
-\nabla^2 f(\textbf{x}(t),t)&-(R^\top \Lambda\otimes I_n)\\ (\Lambda R\otimes I_n)&\vect{0}_{n(N-1)\times{n(N-1)}}
\end{bmatrix},$$
so then, based on our standing assumptions, using Proposition~\ref{prop-neg-eig} and following a similar proof to Theorem~\ref{th-dec}, we obtain that this system is contractive as in item~\ref{tv-1-d}.

Now we prove statement~\ref{tv-2-d}. The KKT conditions that the optimizers $\textbf{x}^*(t)$ and $\nu^*(t)$ must satisfy (i.e., the equilibrium equation of the system~\eqref{eq:pd_dec_o}), for any $t$, are
\begin{align}
\label{aca2-d}\vect{0}_{nN}&=-\nabla f(\textbf{x}^*(t),t) - (L\otimes I_n)\nu^*(t)\\
\label{aca1-d}\vect{0}_{nN}&=(L\otimes I_n)\textbf{x}^*(t).
\end{align}
Now, observe that \eqref{aca1-d} and~\eqref{aca2-d}$\implies \textbf{x}^*(t)=\vect{1}_N\otimes{x^*(t)}$ with $x^*(t)$ being the first $nN$ coordinates of any element of $\mathcal{M}(t)$. Moreover, by left multiplying~\eqref{aca1-d} with $\vect{1}_{N}^\top\otimes I_n$, we obtain that $\vect{0}_n=\sum_{i=1}^N\nabla_{x^i} f_i(x^*(t),t)$. Then, the Implicit Function Theorem~\cite[Theorem 2.5.7]{RA-JEM-TSR:88} (akin to its use in the proof of Theorem~\ref{th:onl}) implies the curve $t\mapsto x^*(t)$ is continuously differentiable for any $t\in\R_{\geq 0}$.
%

Now, from~\eqref{aca2-d} we obtain that $\vect{0}_{n(N-1)}=(R\otimes I_n)\nabla f(\textbf{x}^*(t),t) + (\Lambda\otimes I_n)(R\otimes I_n)\nu^*(t)$. Defining $y^*(t):=(R\otimes I_n)\nu^*(t)$, we get $\vect{0}_{n(N-1)}=(R\otimes I_n)\nabla f(\textbf{x}^*(t),t) + (\Lambda\otimes I_n)y^*(t)$. 
%
Again, an application of the Implicit Function Theorem let us conclude that 
the solution $(\textbf{x}^*,t)\mapsto y^*(\textbf{x}^*,t)$ is continuously differentiable for any $(\textbf{x}^*,t)\in\R^{nN}\times\R_{\geq 0}$; however, since 
$t\mapsto \textbf{x}^*(t)$ is continuously differentiable for any $t\in\R_{\geq 0}$, then 
$t\mapsto y^*(t)$ is continuously differentiable too. 

Then, we can differentiate equation~\eqref{aca2-d} and left multiply it by $(\vect{1}_N^\top\otimes{I_n})$ to obtain
%
\begin{align*}
\dot{x^*}(t)=-g(x^*(t),t)\sum_{i=1}^N\frac{\partial}{\partial t}\nabla_{x^i}f_i(x^*(t),t))
\end{align*}
with $g(x^*(t),t):=(\sum_{i=1}^N\nabla^2_{x_i}f_i(x^*(t),t))^{-1}$.
Recall that $RL=\Lambda R$. 
Then, since $y^*$ 
is continuously differentiable, we differentiate equation~\eqref{aca2-d} and left multiply it by $(R\otimes I_n)$ to obtain
%
\begin{align*}
\dot{y^*}(t)&=-(\Lambda^{-1}R\otimes I_n)(\nabla^2f(\textbf{x}^*(t),t)(\vect{1}_{N}\otimes h_1(x^*(t))\\
&+\frac{\partial}{\partial t}\nabla f(\textbf{x}^*(t),t)).
\end{align*}
Therefore, observe that 
$\norm{\dot{x^*}(t)}_2\leq\frac{1}{\sum_{i=1}^N\ell_{\inf,i}}\sum_{i=1}^N\beta_{1,i}=\frac{\|\beta_{1}\|_{1}}{\|\ell_{\inf}\|_1}$, and $\dot{\textbf{x}^*}(t)=\vect{1}_{N}\otimes \dot{x^*}(t)$. Moreover,
$\norm{\dot{y^*}(t)}_2\leq\frac{1}{|\lambda_2|}\left(\|\ell_{\sup}\|_{\infty}\norm{\dot{\textbf{x}^*}(t)}_2+\|\beta_{1}\|_{1}\right)$, 
where we used: 
$\norm{\frac{\partial}{\partial t}\nabla f(\textbf{x}^*(t),t)}_2\leq \sum_{i=1}^N\norm{\frac{\partial}{\partial t}\nabla_{x^i}f_i(x^*(t),t)}_2$, and $\norm{(\Lambda^{-1}R)\otimes I_n}_2=\sqrt{\lambda_{\max}(\Lambda^{-2}\otimes I_n)}
=\frac{1}{\lambda_{2}}$.

Now, for any $t$, let $(a_1(t),a_2(t))\in\mathcal{M}(t)$. Note that, no matter which element of $\mathcal{M}$ we choose, $a_1(t)=\vect{1}_N\otimes x^*(t)$ and so it is uniquely defined for any $t$ and we also know is differentiable. Now, note that $a_2(t)=\gamma(t)+\vect{1}_N\otimes\alpha$, with $\alpha\in\R^n$ and some uniquely defined $\gamma(t)$; and note that $(R\otimes I_n)a_2(t)=(R\otimes I_n)\gamma(t)$ for any $t$. Therefore $(R\otimes I_n)a_2(t)$ is uniquely defined for any $t$ and we also know is differentiable. In conclusion, the trajectory $((a_1(t),(R\otimes I_n)a_2(t)))_{t\geq 0}=\left(V
(a_1(t),a_2(t))^\top\right)_{t\geq 0}$ is unique and 
$t\mapsto V(a_1(t),a_2(t))^\top$ 
is a continuously differentiable curve. 

Since the system associated to 
$\dot{z}$ 
is contractive and the curve, as we just proved above, 
$t\mapsto z^*(t):=V(\textbf{x}^*(t),\nu^*(t))^\top$ 
is unique and differentiable, 
%
we set
$\Delta(t):=\|z(t)-z^*(t)\|_{2,P^{1/2}}$ 
    and use the result in item~\ref{tv-1-d} and~\cite[Lemma~2]{HDN-TLV-KT-JJS:18} to obtain the 
differential inequality 
%
%
\begin{align*}
\dot{\Delta}(t)&\leq -c\Delta(t)+\left\|\begin{bmatrix}\dot{\textbf{x}^*}(t)\\ \frac{d}{dt}\left((R\otimes I_n)\nu^*(t)\right)\end{bmatrix}\right\|_{2,P^{1/2}}\\
&\leq -c\Delta(t)+\lambda_{\max}(P)(N\norm{\dot{x^*}(t)}_2+\norm{\dot{z^*}(t)}_{2}).
\end{align*}
%
%
Finally, replacing our previous results and using the Comparison
Lemma~\cite{HKK:02} 
conclude the proof for~\ref{tv-2-d}.
\end{proof}

%
%
%
%
%
\begin{remark}
As in Remark~\ref{remark_1}, there is consistency on the right-hand side of equation~\eqref{rho-dec-onl}. 
\end{remark}

\section{Conclusion}
\label{sec:concl}

Primal-dual (PD) dynamics associated to linear equality constrained
optimization problems are studied in centralized, distributed and time-varying cases. Contraction theory provides an overarching analysis of the dynamical behavior and performance for all these cases of
PD 
dynamics.
%
%
As future work, we plan to design controllers that can improve the PD solver's tracking
properties in the time-varying 
setting. We also plan to study distributed PD solvers for globally coupled linear equation constraints and PD solvers in nonsmooth domains.


\bibliographystyle{plainurl}
\bibliography{alias,Main,FB}

\begin{thebibliography}{10}

\bibitem{RA-JEM-TSR:88}
R.~Abraham, J.~E. Marsden, and T.~S. Ratiu.
\newblock {\em Manifolds, Tensor Analysis, and Applications}, volume~75 of {\em
  Applied Mathematical Sciences}.
\newblock Springer, 2 edition, 1988.

\bibitem{ZA-EDS:14b}
Z.~Aminzare and E.~D. Sontag.
\newblock Contraction methods for nonlinear systems: {A} brief introduction and
  some open problems.
\newblock In {\em {IEEE} Conf.\ on Decision and Control}, pages 3835--3847,
  December 2014.

\bibitem{ZA-EDS:14}
Z.~Aminzare and E.~D. Sontag.
\newblock Synchronization of diffusively-connected nonlinear systems: {R}esults
  based on contractions with respect to general norms.
\newblock {\em IEEE Transactions on Network Science and Engineering},
  1(2):91--106, 2014.

\bibitem{KJA-LH-HU:58}
K.~J. Arrow, L.~Hurwicz, and H.~Uzawa, editors.
\newblock {\em Studies in Linear and Nonlinear Programming}.
\newblock Standford University Press, 1958.

\bibitem{MB-GHG-JL:05}
M.~Benzi, G.~H. Golub, and J.~Liesen.
\newblock Numerical solution of saddle point problems.
\newblock {\em Acta Numerica}, 14:1--137, 2005.

\bibitem{XC-NL:19}
X.~Chen and N.~Li.
\newblock Exponential stability of primal-dual gradient dynamics with
  non-strong convexity, 2019.
\newblock Arxiv preprint.
\newblock URL: \url{https://arxiv.org/pdf/1905.00298}.

\bibitem{AC-BG-JC:17}
A.~Cherukuri, B.~Gharesifard, and J.~Cortes.
\newblock Saddle-point dynamics: {Conditions} for asymptotic stability of
  saddle points.
\newblock {\em SIAM Journal on Control and Optimization}, 55(1):486--511, 2017.

\bibitem{PCV-SJ-FB:19r-arxiv}
P.~Cisneros-Velarde, S.~Jafarpour, and F.~Bullo.
\newblock Distributed and time-varying primal-dual dynamics via contraction
  analysis, 2020.
\newblock Arxiv preprint.
\newblock URL: \url{https://arxiv.org/pdf/2003.12665}.

\bibitem{WAC:1965}
W.~A. Coppel.
\newblock {\em Stability and Asymptotic Behavior Of Differential Equations}.
\newblock Heath, 1965.

\bibitem{JC-SKN:19}
J.~Cort{\'e}s and S.~K. Niederl{\"a}nder.
\newblock Distributed coordination for nonsmooth convex optimization via
  saddle-point dynamics.
\newblock {\em Journal of Nonlinear Science}, 29(4):1247--1272, 2019.

\bibitem{MdB-DF-GR-FS:16}
M.~{Di~Bernardo}, D.~Fiore, G.~Russo, and F.~Scafuti.
\newblock Convergence, consensus and synchronization of complex networks via
  contraction theory.
\newblock In J.~L{\"u}, X.~Yu, G.~Chen, and W.~Yu, editors, {\em Complex
  Systems and Networks: Dynamics, Controls and Applications}, pages 313--339.
  Springer, 2016.

\bibitem{MF-SP-VMP-AR:18}
M.~Fazlyab, S.~Paternain, V.~M. Preciado, and A.~Ribeiro.
\newblock Prediction-correction interior-point method for time-varying convex
  optimization.
\newblock {\em IEEE Transactions on Automatic Control}, 63(7):1973--1986, 2018.

\bibitem{DF-FP:10}
D.~Feijer and F.~Paganini.
\newblock Stability of primal--dual gradient dynamics and applications to
  network optimization.
\newblock {\em Automatica}, 46(12):1974--1981, 2010.

\bibitem{HKK:02}
H.~K. Khalil.
\newblock {\em Nonlinear Systems}.
\newblock Prentice Hall, 3 edition, 2002.

\bibitem{SK-JC-SM:15}
S.~S. Kia, J.~Cortes, and S.~Martinez.
\newblock Distributed convex optimization via continuous-time coordination
  algorithms with discrete-time communication.
\newblock {\em Automatica}, 55:254--264, 2015.

\bibitem{SL-LYW-GY:19}
S.~Liang, L.~Y. Wang, and G.~Yin.
\newblock Exponential convergence of distributed primal-dual convex
  optimization algorithm without strong convexity.
\newblock {\em Automatica}, 105:298--306, 2019.

\bibitem{YL-CL-BDOA-GS:19}
Y.~Liu, C.~Lageman, B.~D.O. Anderson, and G.~Shi.
\newblock An {Arrow}-{Hurwicz}-{Uzawa} type flow as least squares solver for
  network linear equations.
\newblock {\em Automatica}, 100:187--193, 2019.

\bibitem{YL-YL-DOA-GS:18}
Y.~Liu, Y.~Lou, B.~D.~O. Anderson, and G.~Shi.
\newblock Network flows that solve least squares for linear equations, 2018.
\newblock \href {http://arxiv.org/abs/1808.04140} {\path{arXiv:1808.04140}}.

\bibitem{WL-JJES:98}
W.~Lohmiller and J.-J.~E. Slotine.
\newblock On contraction analysis for non-linear systems.
\newblock {\em Automatica}, 34(6):683--696, 1998.

\bibitem{EL-GC-KS:14}
E.~{Lovisari}, G.~{Como}, and K.~{Savla}.
\newblock Stability of monotone dynamical flow networks.
\newblock In {\em {IEEE} Conf.\ on Decision and Control}, pages 2384--2389, Los
  Angeles, USA, December 2014.

\bibitem{EM-CZ-SL:14a}
E.~{Mallada}, C.~{Zhao}, and S.~{Low}.
\newblock Optimal load-side control for frequency regulation in smart grids.
\newblock {\em IEEE Transactions on Automatic Control}, 62(12):6294--6309,
  2017.

\bibitem{JM:00}
J.~Munkres.
\newblock {\em Topology}.
\newblock Pearson, 2 edition, 2000.

\bibitem{HDN-TLV-KT-JJS:18}
H.~D. {Nguyen}, T.~L. {Vu}, K.~{Turitsyn}, and J.~{Slotine}.
\newblock Contraction and robustness of continuous time primal-dual dynamics.
\newblock {\em IEEE Control Systems Letters}, 2(4):755--760, 2018.

\bibitem{QCP-JJS:07}
Q.~C. Pham and J.~J. Slotine.
\newblock Stable concurrent synchronization in dynamic system networks.
\newblock {\em Neural Networks}, 20(1):62--77, 2007.

\bibitem{GQ-NL:19}
G.~{Qu} and N.~{Li}.
\newblock On the exponential stability of primal-dual gradient dynamics.
\newblock {\em IEEE Control Systems Letters}, 3(1):43--48, 2019.

\bibitem{SR-WR:17}
S.~{Rahili} and W.~{Ren}.
\newblock Distributed continuous-time convex optimization with time-varying
  cost functions.
\newblock {\em IEEE Transactions on Automatic Control}, 62(4):1590--1605, 2017.

\bibitem{RTR:76}
R.~T. Rockafellar.
\newblock Augmented {Lagrangians} and applications of the proximal point
  algorithm in convex programming.
\newblock {\em Mathematics of Operations Research}, 1(2):97--116, 1976.

\bibitem{JWSP-BKP-NM-FD:19}
J.~W. {Simpson-Porco}, B.~K. {Poolla}, N.~{Monshizadeh}, and F.~{D\"orfler}.
\newblock Input-output performance of linear-quadratic saddle-point algorithms
  with application to distributed resource allocation problems.
\newblock {\em IEEE Transactions on Automatic Control}, 2019.

\bibitem{CS-MY-GH:17}
C.~Sun, M.~Ye, and G.~Hu.
\newblock Distributed time-varying quadratic optimization for multiple agents
  under undirected graphs.
\newblock {\em IEEE Transactions on Automatic Control}, 62(7):3687--3694, 2017.

\bibitem{MV:02}
M.~Vidyasagar.
\newblock {\em Nonlinear Systems Analysis}.
\newblock SIAM, 2002.

\bibitem{JW-NE:11}
J.~Wang and N.~Elia.
\newblock A control perspective for centralized and distributed convex
  optimization.
\newblock In {\em {IEEE} Conf.\ on Decision and Control and European Control
  Conference}, pages 3800--3805, Orlando, FL, USA, 2011.

\bibitem{PW-SM-JL-WR:19}
P.~Wang, S.~Mou, J.~Lian, and W.~Ren.
\newblock Solving a system of linear equations: {From} centralized to
  distributed algorithms.
\newblock {\em Annual Reviews in Control}, 47:306--322, 2019.

\bibitem{TY-XY-JW-DW:19}
T.~Yang, X.~Yi, J.~Wu, Y.~Yuan, D.~Wu, Z.~Meng, Y.~Hong, H.~Wang, Z.~Lin, and
  K.~H. Johansson.
\newblock A survey of distributed optimization.
\newblock {\em Annual Reviews in Control}, 47:278--305, 2019.

\end{thebibliography}

\section{Appendix}

\subsection{Proof of Proposition~\ref{prop-neg-eig}}

We remark that the proof of Proposition~\ref{prop-neg-eig} is complementary to the one given (for a slightly different case) in~\cite[Theorem~3.6]{MB-GHG-JL:05}).
\begin{proof}
Let $P:=\begin{bmatrix}
-B &-A^\top\\
A & \vect{0}_{m\times m}
\end{bmatrix}$ be the matrix in the proposition statement. First, note that $\Re(\lambda(P)) \le \mu_2(P)
= 0$. Therefore, every eigenvalue of $P$ has non-positive real part. 
We first show that $P$ has
no eigenvalue equal to zero. Note that by the Schur complement
determinant identity, we have $\det(P) = \det(-B))\det(-AB^{-1}A^{\top})$. 
Note that $B \succeq b_1 I_n$, therefore
$\det(-B) \ne 0$. Also, note that $B^{-1} \succeq b^{-1}_2 I_n$; and thus $A B^{-1}A^{\top} \succeq A(b^{-1}_{2}I_n) A^{\top} =b^{-1}_{2} AA^{\top} \succ 0$, 
where the last inequality follows from $AA^{\top}$ being invertible. 
This implies that $  \det(-A B^{-1}A^{\top}) \ne 0$. 
As a result, $\det(P)\ne 0$ and $P$ has no zero
eigenvalue. Now we show that $P$ is Hurwitz. Assume that $\lambda = \mathrm{i} \eta$
is an eigenvalue of $P$ with zero real part. This means
that, there exists $u\in \mathbb{C}^n$ and $v\in \mathbb{C}^m$ such that 
\begin{align}\label{eq:eigenvalue}
  \begin{bmatrix}-B & -A^{\top}\\
  A & \vect{0}_{m\times m}\end{bmatrix} \begin{bmatrix}u \\ v\end{bmatrix} = \mathrm{i} \eta \begin{bmatrix}u \\ v\end{bmatrix}. 
\end{align}
Multiplying this equation from the left by $[u^{H} , v^{H}]$,
we get 
$\Re\left(\begin{bmatrix}u^{H} &
  v^{H}\end{bmatrix}\begin{bmatrix}-B & -A^{\top}\\
  A & \vect{0}_{m\times m}\end{bmatrix} \begin{bmatrix}u \\v\end{bmatrix}\right) = 0$.  
This implies that $\Re\left(u^{H}Bu\right) = 0$. Assume
that $u =\theta_1 + \mathrm{i} \theta_2$, where $\theta_1,\theta_2\in \real^n$. Then
$\Re\left(u^{H}Bu\right) = 0$  is equivalent to $\theta_1^T B\theta_1 + \theta_2^{\top}B\theta_2 = 0$. 
Since $B \succeq b_1 I_n$, we get that $u =
\vect{0}_n$. As a result, the equation~\eqref{eq:eigenvalue} can be written as the system 
$A^{\top}v = \vect{0}_n$ and $v = \vect{0}_{k}$. 
This implies that if $\begin{bmatrix}u \\ v\end{bmatrix} \in
\mathbb{C}^{n+m}$ is an eigenvector associated to the eigenvalue
$\lambda = \mathrm{i}\eta$, then $\begin{bmatrix}u \\ v\end{bmatrix} =
\vect{0}_{n+m}$. Thus, the matrix $P$ has no eigenvalue
with zero real part. Therefore, the real part of every eigenvalue of
$P$ is negative and the matrix $P$ is Hurwitz.
\end{proof}

\subsection{A simple generalization of~\cite[Lemma~6]{EL-GC-KS:14}}

\begin{lemma}[Convergence of weakly contractive systems]
\label{prop-weak-conv}
Consider the dynamical system $\dot{x}=f(x,t)$, $x\in\R^n$, where $f$ is
continuously-differentiable with respect to $x$ and weakly contractive respect to some norm $\norm{\cdot}$, and let $x^*$ be an equilibrium
for the system, i.e., $f(x^*,t) =\vect{0}_n$, for every $t\ge 0$. Then $x^*$ is locally asymptotically stable if and only if it is globally asymptotically stable.
\end{lemma}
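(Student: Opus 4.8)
The plan is to prove the two implications separately; the reverse one is immediate and the forward one carries all the content. Global asymptotic stability of $x^*$ contains, by definition, both local stability and local attractivity, hence local asymptotic stability, so only the implication ``locally asymptotically stable $\implies$ globally asymptotically stable'' needs an argument. For that implication I would exploit weak contractivity twice: once to obtain \emph{global} Lyapunov stability essentially for free, and once to upgrade \emph{local} attractivity to \emph{global} attractivity by a connectedness argument on the basin of attraction. This is the ``simple generalization'' alluded to: \cite[Lemma~6]{EL-GC-KS:14} is the time-invariant version, and the only new ingredient is to track initial times carefully.

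First I would note that since $x^*$ is an equilibrium, $\phi(t,t_0,x^*)=x^*$ for all $t\ge t_0$, so applying the weak-contractivity estimate with $y_0=x^*$ gives $\norm{\phi(t,t_0,x_0)-x^*}\le \norm{x_0-x^*}$ for every $x_0\in\R^n$, every $t_0\ge 0$, and every $t\ge t_0$. In particular every trajectory is bounded and, together with the $C^1$ hypothesis on $f$, is defined for all forward time, and $x^*$ is globally stable in the sense of Lyapunov (uniformly in $t_0$, with $\delta(\varepsilon)=\varepsilon$). It remains only to prove global attractivity.

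Next I would fix an initial time $t_0\ge 0$, set $\mathcal{B}_{t_0}:=\setdef{x_0\in\R^n}{\lim_{t\to\infty}\phi(t,t_0,x_0)=x^*}$, and pick $\delta>0$ such that any trajectory entering the ball $B(x^*,\delta)$ at any time converges to $x^*$ (such $\delta$ exists by local asymptotic stability of $x^*$, after restarting the trajectory at the entry time). The key claim is that $x_0\in\mathcal{B}_{t_0}$ and $\norm{y_0-x_0}<\delta$ imply $y_0\in\mathcal{B}_{t_0}$: by the triangle inequality and weak contractivity, $\norm{\phi(t,t_0,y_0)-x^*}\le \norm{\phi(t,t_0,y_0)-\phi(t,t_0,x_0)}+\norm{\phi(t,t_0,x_0)-x^*}\le \norm{y_0-x_0}+\norm{\phi(t,t_0,x_0)-x^*}$, and since the last term tends to $0$, for some $T$ we get $\norm{\phi(T,t_0,y_0)-x^*}<\delta$, whence $\phi(\cdot,t_0,y_0)\to x^*$. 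This single claim shows at once that $\mathcal{B}_{t_0}$ is open and that its complement is open (if $B(x_0,\delta)$ met $\mathcal{B}_{t_0}$, then $x_0$ itself would lie in $\mathcal{B}_{t_0}$). Since $\mathcal{B}_{t_0}$ is a nonempty (it contains $x^*$) subset of the connected space $\R^n$ that is both closed and open, it equals $\R^n$; as $t_0$ was arbitrary, $x^*$ is globally attractive. Combining with the global stability above gives global asymptotic stability.

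I expect the only real obstacle, and the only point requiring care beyond the time-invariant setting, to be the clean statement ``there is a \emph{single} radius $\delta$ such that every trajectory entering $B(x^*,\delta)$, no matter at what time it does so, converges to $x^*$'' — that is, local attractivity holding uniformly in the initial time. This follows from the (uniform) asymptotic stability hypothesis by the restarting argument indicated above, but I would spell it out explicitly in the write-up. Everything else — the triangle-inequality estimate, the openness of $\mathcal{B}_{t_0}$ and of its complement, and the connectedness of $\R^n$ — is routine.
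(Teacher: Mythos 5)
Your proposal is correct, and it reaches the conclusion by a genuinely different route than the paper. Both arguments pivot on the same one-line estimate — weak contractivity plus the triangle inequality gives $\norm{\phi(t,t_0,y_0)-x^*}\le\norm{y_0-x_0}+\norm{\phi(t,t_0,x_0)-x^*}$ — but you use it to show that the basin of attraction $\mathcal{B}_{t_0}$ is simultaneously open and closed, and then invoke connectedness of $\R^n$; the paper instead uses it quantitatively: by compactness of $\overline{B}(x^*,\epsilon)$ it extracts a single time $T$ after which every point of that ball has reached $\overline{B}(x^*,\epsilon/2)$, and then, comparing an arbitrary trajectory with the trajectory started from the point of $\partial B(x^*,\epsilon)$ on the segment joining $x(0)$ to $x^*$, shows that $\norm{x(t)-x^*}$ drops by at least $\epsilon/2$ over each window of length $T$, so every trajectory enters the local region of attraction in finite time. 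The paper's version buys an explicit bound (of order $T\norm{x(0)-x^*}/\epsilon$) on the time needed to enter the basin, at the cost of the compactness/uniform-$T$ step and of the (norm-dependent but valid) identity $\norm{x(0)-y}=\norm{x(0)-x^*}-\epsilon$ for $y$ on the segment; your version is softer but shorter and dispenses with compactness entirely. One point in your favor: both arguments restart trajectories at later times (you at the entry time into $B(x^*,\delta)$, the paper at times $T,2T,\dots$ and at the final entry into $\overline{B}(x^*,\epsilon)$), so both tacitly need local attractivity to hold uniformly over the restarting time; you flag this explicitly, whereas the paper's proof passes over it in silence (it is harmless in the paper's applications, where local exponential stability is obtained uniformly from a Hurwitz Jacobian via Coppel's inequality, but it is a genuine hypothesis-level subtlety for the lemma as stated). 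Your separate observation that weak contractivity against the constant trajectory $x^*$ already yields global Lyapunov stability with $\delta(\varepsilon)=\varepsilon$ is a clean touch the paper leaves implicit.
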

\begin{proof}
We only prove the nontrivial implication: if $x^*$ is locally asymptotically stable then it is globally asymptotically stable. Since $x^*$ is a locally asymptotically stable equilibrium point for
the dynamical system, then there exists $\epsilon>0$, such that, for every $y\in
\overline{B}(x^*,\epsilon)$ we have $\phi(t,0,y)\to x^*$ as $t\to \infty$. Note that,
for every $z\in \overline{B}(x^*,\epsilon)$, there exists $T_z$ such
that $\phi(T_z,0,z) \in
\overline{B}(x^*,\epsilon/2)$. Using the fact that the
closed ball $\overline{B}(x^*,\epsilon)$ is compact, we get that,
there exists $T$ such that, for every $z\in
\overline{B}(x^*,\epsilon)$, we have $\phi(T,0,z) \in
\overline{B}(x^*,\epsilon/2)$. Suppose that $t\mapsto x(t)$ is a
trajectory of the dynamical system. Assume
that $y\in \partial B(x^*,\epsilon)$ is a point on the straight line
connecting $x(0)$ to the unique equilibrium point
$x^*$. Then we have $\left\|x(T) - x^*\right\| \le  \left\|x(T) -
  \phi(T,0,y)\right\| + \left\|\phi(T,0,y)- x^*\right\|\le
  \left\|x(0)-y\right\| + \epsilon/2 =\|x(0)- x^*\|-\epsilon/2$.
Therefore, after time $T$, $t\mapsto \left\|x(t) -
  x^*\right\|$ decreases by $\epsilon/2$. As a result, there
    exists a finite time $T_{\inf}$ such that, for every $t\ge
    T_{\inf}$, we have $x(t)\in
    \overline{B}(x^*,\epsilon)$. Since
    $\overline{B}(x^*,\epsilon)$ is in the region of
    attraction of $x^*$ the trajectory $t\mapsto x(t)$
    converges to $x^*$.
%
\end{proof}

\end{document}